\documentclass[aps,prd,onecolumn,superscriptaddress,nofootinbib,11.5pt]{revtex4-2}

\usepackage[T1]{fontenc}
\usepackage[utf8]{inputenc}

\usepackage[left=1.05in,right=1.05in,top=1in,bottom=1in]{geometry}

\usepackage{bbm}
\usepackage{amsmath,amssymb}
\usepackage{graphicx}
\usepackage{xcolor}
\usepackage{hyperref}
\usepackage{cleveref}
\usepackage{setspace}
\newcommand{\hl}[1]{#1}

\usepackage{simpler-wick}

\usepackage{feynmp-auto}

\usepackage{amsthm}

\newtheorem{Theorem}{Theorem}

\begin{document}

\title{From quantum time to manifestly covariant QFT: on the need for a quantum-action--based quantization}

\author{N. L. Diaz}
\email{nldiaz.unlp@gmail.com}
\affiliation{Information Sciences, Los Alamos National Laboratory, Los Alamos, NM 87545, USA}
\affiliation{Center for Non-Linear Studies, Los Alamos National Laboratory, Los Alamos, NM 87545, USA}

\begin{abstract}
In quantum time (QT) schemes, time is promoted to a degree of freedom, allowing Lorentz
covariance to be made explicit for single particles. We ask whether this can be lifted to QFT, so that Lorentz covariance becomes manifest at the Hilbert-space level, rather than being hidden as in the standard canonical formulation.
We address this question by proposing a second-quantized approach in which the elementary particle is the QT particle itself, leading naturally to the notion of spacetime field algebras and of quantum action. We show, however, that a naive many-body construction runs into inconsistencies. To pinpoint their origin we introduce a classical counterpart of the second-quantized formalism, spacetime classical mechanics (SCM), and prove a no-go theorem: Dirac quantization of SCM collapses back to standard QFT and therefore hides covariance. We circumvent this problem by presenting a quantum-action--based quantization that yields a spacetime version of quantum mechanics (SQM), making covariance manifest for (interacting) QFTs. Finally, we show that this resolution is tied to a genuine spacetime generalization of the notion of quantum state, required by causality and closely connected to recent ``states over time'' proposals and, in dS/CFT--motivated settings, to microscopic notions of timelike entanglement and emergent time.
\end{abstract}

\maketitle

\section{Introduction}

Time in quantum mechanics (QM) plays a very particular role. 
While other observable quantities are described through quantum mechanical rules, namely, by associated operators on Hilbert space, 
 time is treated as an external  parameter. This indicates a basic asymmetry between space and time, in tension with the formal 
 symmetric role they have in classical physics, and in particular in relativity. On the other hand, contrary to what is usually believed, time can be easily promoted to an observable by following the Page and Wootters mechanism \cite{page1983evolution}, or, in a closely related approach, by promoting time to a dynamical variable in classical phase-space and then quantizing the extended phase-space following Dirac quantization scheme of constrained systems \cite{dirac1950generalized}. A common  feature of these proposals is the use of an enlarged (often called “kinematic”) Hilbert space in which both the system and a time degree of freedom are described jointly. We will refer to such constructions as quantum time (QT) approaches.
While an important motivation for QT formalisms comes from quantum gravity and the problem of time \cite{dewitt1967quantum, unruh1989time, kuchavr2011time, anderson2012problem}, more recently its relevance to quantum information has also been explored \cite{giovannetti2015quantum, boette2016system, boette2018history} providing a renewed perspective that emphasizes the role of quantum correlations between the system and time degrees of freedom. 
These insights have, in turn, been brought to relativistic scenarios \cite{diaz2019historystate, diaz2019history, giovannetti2023geometric, diaz2023spacetime}. In particular, as remarked in \cite{diaz2019historystate, diaz2019history}, in the case of a single quantum particle, QT schemes provide an interesting quantum model where space and time are on an equal footing and Lorentz covariance can be made explicit at the Hilbert space level. 
This claim is  built under a subtle but crucial conceptual shift: In \cite{diaz2019historystate,diaz2019history}
the extended Hilbert space, previously regarded as merely ``kinematic'' (or auxiliary) \cite{marolf1995quantum, hohn2021equivalence}, is treated as the correct arena in which physical states and correlations live, effectively modifying the standard  rules of QM and making the discussion of Lorentz invariance for quantum systems a genuine foundational issue. In other words, the results in \cite{diaz2019historystate,diaz2019history}, and subsequent work \cite{diaz2021spacetime}, pointed out that the proper treatment of relativistic symmetry in QM at the Hilbert space level may require rethinking the axioms of QM itself.

On the other hand, it is widely known that quantum field theory (QFT), the framework underlying the highly successful Standard Model of particle physics, provides a natural setting for relativistic quantum physics. This would seem to belittle the relevance of the previous discussion and of QT schemes altogether: In QFTs space is a classical site-index specifying the label of a particular quantum field at a given time. The fact that both space and time are classical in QFTs suggests that the recent QT discussion is not relevant for QFTs and thus high-energy physics in general. 
However, claiming that space and time are on an equal footing in QFT is highly misleading. Even if both space and time are classical, a deeper space and time asymmetry characterizes QFTs. The problem is not in the QFT approach itself but at the heart of the axioms of QM: Joint systems separated in space are described in a Hilbert space built from the tensor product of the Hilbert spaces of the parts. No such  rule is applied across time. This means that what is meant by space and time needs to be pre-established at the classical level in order to define any quantum theory, including QFTs. 
A direct mathematical consequence is the need to break Lorentz invariance to impose canonical commutation relations as required by Dirac's quantization scheme. 
The obstacle introduced by equal time commutation relations has been extensively discussed in the foundational years of QFTs \cite{dyson1949radiation} 
and it was concluded that Lorentz symmetry, while hidden, is not truly lost as the physical predictions of relativistic QFTs are perfectly compatible with relativity. 
This statement is also compatible with the Path Integral (PI) approach \cite{feynman1948space, feynman1949space} that allows one to preserve Lorentz symmetry explicitly at the price of abandoning a Hilbert space structure altogether.

The previous considerations bring us back to the foundational discussion about the dichotomy between quantum mechanics and relativity. 
Given that Lorentz invariance is ultimately present in relativistic QFTs, even if hidden by the standard canonical formalism, 
we may pose a natural but somehow overlooked question: Is it possible to reformulate (or minimally extend) the canonical approach so that Lorentz symmetry is manifest in QFTs at the Hilbert space level (rather than recovering it only at the level of predictions)? Can the lessons from single particle (sp) QT models  be lifted to QFTs? 
Our answer is yes, but it requires abandoning Dirac-style constraint quantization in the many-body case and replacing it with a ``quantum-action-based'' 
quantization that implements the constraints inside correlators. This is the main result of the current manuscript. In order to prove it we will make use of the recently proposed framework of  \emph{spacetime quantum mechanics} (SQM) \cite{diaz2023spacetime,diaz2025spacetime} able to tackle the problem of imposing tensor products across timelike separations we discussed above. 
It is worth noting that the SQM approach is particularly adequate to describe quantum fields \cite{diaz2023spacetime} and while rooted in a Hilbert space approach it also provides a novel perspective on the PI formulation \cite{diaz2021path} that can be extended to fermions \cite{diaz2025spacetime}. Moreover, a spacetime classical mechanics (SCM) approach can also be formulated accordingly \cite{diaz2023spacetime}. In this work the SQM approach will be rederived self-consistently from our considerations on QT, showing that it can be reinterpreted as the proper (but subtle) many-body completion of QT schemes. Considering that several mathematical subtleties and details of the results presented here have been tackled in the previous SQM literature, we will refer to these discussions when appropriate.

The manuscript is organized as follows. In Sec.~\ref{sec:sp} we introduce the QT scheme for a single particle via Dirac’s quantization of a classical particle formulated in an extended phase space containing time, and we recall how this standard construction leads to a “universe equation” \cite{kiefer2012quantum}. In Sec.~\ref{sec:secondq} we propose a naive second quantization of this formalism; while this step already contains many of the ingredients needed for an explicitly covariant QFT, including a quantum version of the classical action, it does not yet provide a consistent many-body scheme, and in Sec.~\ref{sec:limitations} we show how unavoidable inconsistencies arise when one attempts to generalize the universe equation to many particles. In Sec.~\ref{sec:SCM} we introduce the SCM approach as a classical counterpart of the structure suggested by the second-quantized QT construction, which allows us to identify the proper constraints and pinpoint the origin of the previous inconsistencies. In Sec.~\ref{sec:DiracquantizationSCM} we discuss Dirac’s quantization of SCM and prove a no-go theorem showing that Dirac quantization of SCM leads back to standard QFT. In Sec.~\ref{sec:quantumactionquantization} we present an alternative, quantum-action-based quantization and show that, when applied to SCM, it leads to SQM instead; in particular, we show that the ability to circumvent the use of Dirac brackets before quantization is precisely what allows one to preserve manifest Lorentz symmetry in the quantum setting while (i) avoiding the inconsistencies of a direct second-quantized-QT construction and (ii) preserving the QT notion of particle. In Sec.~\ref{sec:interactions} we explain how the formalism naturally applies to interacting QFTs. 
Finally, in Sec.~\ref{sec:ststates} we explain how the need for a non-standard quantization scheme is linked to the need for a non-standard generalization of the notion of quantum state to spacetime; this connects our considerations to recent discussions on defining quantum states in spacetime \cite{horsman2017can, fitzsimons2015quantum, cotler2018superdensity, fullwood2024operator, diaz2025spacetime, milekhin2025observable, guo2025spacetime}, and supports the conclusion that genuinely non-trivial generalizations of the standard notion of state are required.

\section{Classical and quantum dynamical time}

\subsection{Single Particle: From time in phase-space to quantum time via Dirac quantization}\label{sec:sp}

Let us briefly recall a procedure due to Dirac \cite{dirac1950generalized} that allows one to include time in phase space. 
For simplicity, consider a one-dimensional particle (the general case is entirely analogous) described by a time-independent Lagrangian $L(q,\dot q)$,
\begin{equation}
S[q(t)] = \int_{t_1}^{t_2} dt\, L(q,\dot q).
\label{eq:B1}
\end{equation}
We may promote $t$ to a coordinate in configuration space by 
introducing a parameter $\tau$ that labels the trajectory $(t(\tau),q(\tau))$. Then the action can be rewritten as
\begin{equation}
S[q(\tau),t(\tau)] 
= \int_{\tau_1}^{\tau_2} d\tau\, \dot t\, L\!\left(q,\frac{\dot q}{\dot t}\right)
\equiv \int_{\tau_1}^{\tau_2} d\tau\, \tilde L(q,\dot q,\dot t).
,
\label{eq:B2}
\end{equation}
for $\tau_i=\tau(t_i)$.
The canonical momenta associated with $\tilde L$ are (see, e.g., standard treatments of reparametrization-invariant systems \cite{kiefer2012quantum} for details)
\begin{equation}
\tilde p_q=\frac{\partial \tilde L}{\partial \dot q}=p_q,
\qquad
p_t=\frac{\partial \tilde L}{\partial \dot t}=-H,
\label{eq:B3}
\end{equation}
where $H$ is the usual Hamiltonian obtained from $L(q,\dot q)$. 
The corresponding Hamiltonian for the parameterized system  is instead given by
\begin{equation}
\tilde H=\tilde p_q \dot q + p_t \dot t - \tilde L
= \dot t\,(H+p_t).
\end{equation}
Considering that the equations of motion yield $p_t=-H$ we must impose the 
(weak) constraint
\begin{equation}
H_S \equiv -(p_t+H) \approx 0,
\label{eq:B4}
\end{equation}
where $\approx$ denotes a constraint to be imposed after all brackets have been evaluated. Here $H_S$ denotes what is usually called a ``super Hamiltonian'' (defined with an overall minus sign for later convenience) and the equation \eqref{eq:B4} might be denoted as a Wheeler-de-Witt like equation, since a similar constraint appears in a canonical quantization of gravity \cite{dewitt1967quantum} (therein parameterization invariance is inherited from a genuine covariance under general diffeomorphisms).

Let us now focus on the relativistic action of a scalar particle. In this context starting with an exhibiting parameterization invariance is natural: One can take $L=-m\sqrt{\dot{x}^\mu \dot{x}_\mu}$ where $x^\mu(\tau)\equiv (t(\tau),q(\tau))$ so the action is invariant under $\tau\to \tau'(\tau)$. In particular, the gauge choice $t=\tau$ leads to $S=-m\int dt\, \sqrt{1-\dot{q}(t)^2}$ and to $p(\tau)=\frac{m \dot{q}}{\sqrt{1-{\dot{q}^2}}}$, $H=\sqrt{p^2+m^2}$. If we start with the latter form of the action is clear that Eq.\ \eqref{eq:B4}  now reads $p_t+\sqrt{p^2+m^2}\approx 0$. Another possibility is to work directly with a general $\tau$ leading instead to the constraint $p^\mu p_\mu-m^2\approx 0$.  For our purposes the form of Eq.\ \eqref{eq:B4} suffices.

The next step in Dirac's quantization corresponds to promote the Poisson brackets (PBs) $\{t,p_t\}=1$, $\{q,p_q\}=1$ to commutators  $[t,p_t]=i$, $[q,p_q]=i$ (we set $\hbar=1$). Since all other commutators vanish (e.g. $[t,q]=[t,p_q]=0$) we end up in a Hilbert space with the structure $h=h_t\otimes h_q$. Then, we impose all constraints from the classical theory by defining physical subspaces. In our present case there is only one constraint so we just need $H_S|\Psi\rangle=0$. A direct consequence of this constraint is that within the physical subspace defined by the previous equation $\langle p_t\rangle= -\langle H\rangle$, namely expectation values computed with either operator give the same result. Since $p_t$ generates time translations in the time sector ($e^{i\tau p_t}|t\rangle=|t+\tau\rangle$ for $\hat{t}|t\rangle =t |\rangle$), the constraint is thus identifying time translations in time (in the geometrical sense) with the flow generated by the Hamiltonian.  Notice that here we are considering an expectation value computed directly within $h$. Under the standard Dirac interpretation the Kinematic space $h$ should instead be abandoned, and the physical subspace is not to be regarded as an actual subspace but as a tool to induce a physical inner product within a new Hilbert space \cite{marolf1995quantum}. Crucially, we won't follow Dirac's interpretation and work directly within $h$ so the physical subspace is for us a true subspace and we preserve the time operator (which under Dirac's scheme is no longer present within the ``actual'' Hilbert space).

The previous interpretation is more in line with the PaW mechanism where the Hilbert space $h$ is regarded as physical since $h_T$ is, following PaW, to be considered as the Hilbert space of an actual physical system (sometimes referred to as the clock \cite{giovannetti2015quantum}). Within the PaW scheme standard expectation values at a given time are recovered by conditioning on ``clock readings'': By defining \begin{equation}\label{eq:Ot}
    O(t):=|t\rangle \langle t|\otimes O
\end{equation} one obtains
\begin{equation}\label{eq:conditioning}
    \langle \psi(t)|O|\psi(t)\rangle=\langle \Psi|O(t) |\Psi\rangle\,,
\end{equation}
where $|\Psi\rangle$ satisfies the universe equation $H_S|\Psi\rangle
=\bigl(P_t\otimes I + I\otimes H\bigr)|\Psi\rangle =0$ that leads to $|\Psi\rangle=\int dt\, |t\rangle |\psi(t)\rangle$ for $|\psi(t)\rangle=e^{-iHt}|\psi\rangle$ (with $|\psi\rangle$ the initial state). This also means that
\begin{equation}
    \langle \Psi|(i\dot{O}(t)+[H,O(t)])|\Psi\rangle=0\,,
\end{equation}
i.e., the operator $O(t)$ satisfies a Heisenberg-like equation, but with $\dot{O}(t)\equiv \lim_{\epsilon\to 0}[O(t+\epsilon)-O(t)]/\epsilon$ a ``geometrical'' derivative (we recall that here $t$ labels states, not unitary evolution). 
While we won't follow the PaW interpretation regarding conditioning on a physical system, we will discuss the value of equation \eqref{eq:conditioning} within a many-body scenario. Let us notice that Eq.\ \eqref{eq:conditioning} is an ``operator'' version of the standard conditioning where one applies the projector in time $|t\rangle \langle t|$
directly to $|\Psi\rangle$ instead \cite{giovannetti2015quantum, boette2016system}. While at this sp level the two approaches are equivalent, when generalizing to many particle  we will find more useful to consider the operator conditioning scheme.

Notice also that according to Dirac's scheme $O(t)$ is not a physical operator meaning that it is not defined once the kinematic space is abandoned. It is also worth mentioning that working directly within $h$ may seem to cause problems regarding the inner product of physical states, with infinities easily arising. While these infinities can be regularized by considering e.g. finite time windows of discrete time, in the case of a single particle  a straightforward interpretation is possible: Within $h$ the universe equation can be interpreted as the eigenvalue equation of a mass operator. Then, a Dirac delta like normalization is required since mass can take continuum values. This provides a consistent way to treat these infinities that has a proper non-relativistic limit. While these subtleties are not particularly relevant for our purposes, and we refer the reader to  \cite{diaz2019historystate, diaz2019history} for details, let us add that the mechanism of inducing a physical inner product, in essence, is just avoiding redundant projections onto physical subspaces \footnote{If $\Pi_m\equiv |m\rangle\langle m|$ projects onto the physical subspace $\langle \Phi|\Pi_{m}\Pi_{m'}|\Psi\rangle=\delta(m-m')\langle \Phi|\Pi_{m}|\Psi\rangle$ so one might replace $\langle \Phi|\Pi_{m}\Pi_{m'}|\Psi\rangle\to \langle \Phi|\Pi_{m}|\Psi\rangle$ to get a finite result; The same result can be obtained if a small mass uncertainty is allowed; we refer the reader to \cite{diaz2019history} for details.}.

Let us now focus entirely on the case of the Klein Gordon particle in arbitrary $D=d+1$ dimensions. In the extended Hilbert space $h$ we have $[X^\mu,P_\nu]=i\delta^{\mu}_{\;\,\nu}$. Here $X^0$ is the time operator and $h$ has the Hilbert space structure $h=h_T \otimes h_S$ for $h_S=\text{span}\{|\textbf{x}\rangle\}$ the standard Hilbert space of a $d$-dimensional particle. A complete basis of $h$ is given by eigenstates of $X^\mu$ that we denote as $|x\rangle=|x_0\rangle \otimes |x_1\rangle \otimes \dots |x_d\rangle$ such that $\hat{X}^\mu |x\rangle= x^\mu |x\rangle$. Similarly we write $\hat{P}^\mu |p\rangle= -p^\mu |p\rangle$, equivalent to $\hat{P}_0|p\rangle=-p_0|p\rangle$ and $\hat{P}_i|p\rangle=p_i|p\rangle$ where we adopt the mostly minus convention for the metric $\eta_{\mu \nu}=\text{diag}(1,-1,-1,\dots, -1)$. With these conventions $\hat{P}^0=\hat{P}_0=\hat{p}_t$, while we fix the momentum-eigenstate convention so that the overlap
 $\langle p|x\rangle=e^{ipx}$ is a Lorentz scalar, as follows from $px\equiv p^\mu x_\mu=p^0 x^0-\textbf{p}.\textbf{x}$.
These bases also satisfy
\begin{equation}\label{eq:orthogon}
    \langle x|y\rangle=\delta^{(D)}(x-y)\,,\;\; \langle p|q\rangle=(2\pi)^D\delta^{(D)}(p-q)\,,
\end{equation}
which are just the standard canonical relations now extended to time.

Notice that while so far we haven't imposed any physical condition space and time are clearly on an equal footing. We can define unitary Lorentz transformations accordingly: $U(\Lambda):=\exp[i \omega^{\mu\nu}L_{\mu\nu}]$ induces $U(\Lambda )|x\rangle=|\Lambda x\rangle$ and $U(\Lambda )|p\rangle=|\Lambda p\rangle$ for $\Lambda=e^\omega$ and $L_{\mu\nu}=X_\mu P_\nu-X_\nu P_\mu$ the angular momentum in spacetime. The operators $L_{\mu\nu}$, and $P_\mu$ close the Poincar\'e algebra. Interestingly, we see that $h$ admits an explicit representation of special relativistic symmetries that does not involve dynamical information. In other words, we don't need to introduce a Hamiltonian to define $U(\Lambda)$, and the proper Poincar\'e algebra is recovered since $P^0$ is also independent of the Hamiltonian, and in particular $[P_\mu,P_\nu]=0$.

On the other hand, in order to introduce dynamics we recall that the universe equation reads $(P^0+\sqrt{\textbf{P}^2+m^2})|\Psi\rangle=0$, which is equivalent to $P^2|\Psi\rangle=m^2|\Psi\rangle$ with the additional condition of positive $P^0$. By projecting onto the position eigenbasis we thus recover Klein-Gordon equation:
\begin{equation}
    \langle x| (P^2-m^2)|\Psi\rangle=-(\partial^2+m^2)\Psi(X)=0\,,
\end{equation}
for $|\Psi\rangle=\int d^Dx \, \Psi(x)|x\rangle$ and where we used that $\langle x|P_\mu |y\rangle=-i\partial_\mu \langle x|y\rangle$.

A positive energy solution has then the form $|\Psi\rangle_m=\int \frac{d^Dp}{(2\pi)^D}\theta(P^0)2\pi\delta(P^2-m^2)\psi(\textbf{p})|p\rangle$ which is equal to $\int \frac{d^dp}{(2\pi)^d}\psi(\textbf{p})|E_{\textbf{p}m},\textbf{p}\rangle$ with $\theta$ the Heaviside step function and $E_{\textbf{p}m}=\sqrt{\textbf{p}^2+m^2}$. In the time basis we can also write
\begin{equation}\label{eq:physm}
    |\Psi\rangle_m=\int dt\int \frac{d^dp}{(2\pi)^d}e^{-iE_{\textbf{p}m}t}\psi(\textbf{p})|t,\textbf{p}\rangle\,,
\end{equation}
with the same expression for negative-energy solutions after the replacement $E_{\textbf{p}m}\to -E_{\textbf{p}m}$. We mention that these states can be regarded as mass eigenstates such that their orthonormality under the $D$-dimensional Schr\"odinger's inner product leads to $\psi(\textbf{p})$ normalized according to the Klein-Gordon norm \cite{diaz2019history}.

Let us now notice that $[U(\Lambda), P^2]=0$ indicates Lorentz symmetry at the dynamical level. We remark that in this context $U(\Lambda)$ is well-defined even if a given universe operator doesn't commute with it. The representation of the symmetry has been decoupled from the dynamics. This happens with standard quantum mechanical representation of symmetries: For example, 
a Hilbert space can provide a representation of the rotation group even if the Hamiltonian of the system in question doesn't exhibit rotational symmetry. 
The QT framework is thus particularly aligned with the standard spirit of QM, provided we preserve the full $h$.

\subsection{Second quantization: From the universe operator to the quantum action operator}\label{sec:secondq}

\begin{figure}[t!]
    \centering
\includegraphics[width=0.63\linewidth]{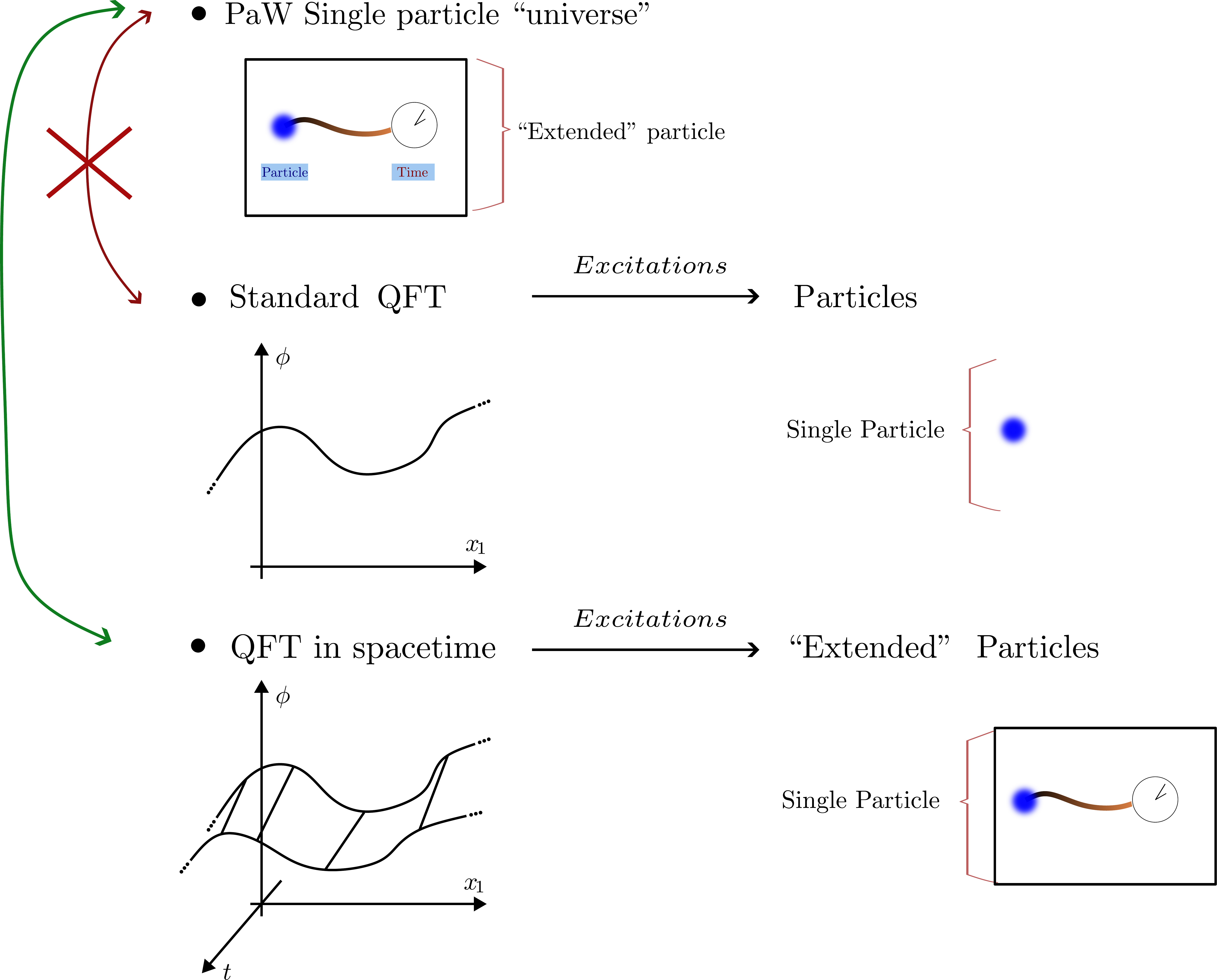}
    \caption{Scheme on the relation between the QT scheme of a particle, the standard notion of particle in QFT and the connection between the two developed here. On top we depict a single particle formulated according to the PaW mechanism with time as a degree of freedom. This defines an extended notion of quantum particle. On the other hand, according to QFTs, a particle is a quantum excitation of an underlying field. However,
    the particles corresponding to standard QFT do not include the time degree of freedom. In this manuscript we describe a formulation of QFT that is directly connected to the PaW notion of particle through second quantization: Contrary to standard QFT, where equal-time commutators are imposed, such that the Hilbert space describes field configurations at a given time, the extended formulation assumes a canonical algebra that treats space and time indices on an equal-footing. The ensuing Hilbert space admits a basis of field configurations in spacetime and is also isomorphic to a Fock space where the single particle states contain time as a degree of freedom.}
    \label{fig:fig1}
\end{figure}

It is natural to discuss the relevance of the previous construction in a many-body setting. Let us first notice that the QT notion of particle cannot be extended by a direct application of the QT scheme to a Fock space: Schematically, the Hilbert space structure $h_T\otimes h_{\text{Fock}}$ contains sp states with the previous form, but also two particle states like $|t\rangle \otimes a^\dag(\textbf{p}_1)a^\dag(\textbf{p}_2)|0\rangle$, and so on, for which a proper Lorentz transformation is clearly out of reach. In this context $t$ could represent a proper time, but not the time corresponding to a given reference frame.

Instead, we propose to construct a new Fock space where the elementary sp is taken directly as the QT-particle. Mathematically, we can then define the Fock space $\mathcal{H}=\text{Sym}[\oplus_n h^{\otimes n}]$ for $h$ the QT Hilbert space with spacetime basis $|x\rangle$.  Let us remark that we are preserving the whole $h$ space in this construction. We also recall that in this context ``Sym'' indicates that we symmetrize when multiple copies of the system are present, as we should for fundamental indistinguishable particles (for fermions one just replaces with an antisymmetrization). For example a two particle state may be expanded in the basis
\begin{equation}
\frac{|x_1\rangle|x_2\rangle+|x_2\rangle|x_1\rangle}{\sqrt{2}}\equiv    \frac{|t_1,\textbf{x}_1\rangle\otimes  |t_2,\textbf{x}_2\rangle+|t_2,\textbf{x}_2\rangle\otimes  |t_1,\textbf{x}_1\rangle}{\sqrt{2}}\,.
\end{equation}
Interestingly, the structure ``time'' $\otimes$ ``space'' that was present for a single particle, doesn't hold for many-particle states. Instead, a ``multi-time'' like structure appears. 
In order to better understand the structure of $\mathcal{H}$ let us recall that a far more convenient description of a Fock space is obtained by defining ladder operators: In our case this leads one to $|x\rangle=a^{\dag}(x)|\Omega\rangle$ for $|\Omega\rangle$ the vacuum state annihilated by all $a(x)$ and to
\begin{equation}\label{eq:basicalg}
    [a(x),a^\dag(y)]=\delta^{(D)}(x-y)\,,
\end{equation}
in agreement with Eq.\ \eqref{eq:orthogon}. Equation \eqref{eq:basicalg} is of fundamental importance: This is a spacetime algebra treating space and time as indices. As a matter of fact, at this level space and time are not distinguishable with \eqref{eq:basicalg} isomorphic to the standard canonical algebra in $d+1$ (spatial) dimensions. On the other hand, this scheme is in clear contrast to the standard QFT quantization where canonical commutators require one to consider equal times. Moreover, if we discretize spacetime so that 
$[a_{t,\textbf{x}},a^\dag_{t',\textbf{x}'}]=\delta_{tt'}\delta_{\textbf{x},\textbf{y}}$ ($a(x)\equiv a_{t,\textbf{x}}/\sqrt{\epsilon^D}$ for a uniform spacing $\epsilon$ on all dimensions; see details on the continuum spacetime limit on Appendix \ref{app:continuumtimelimit}), where the deltas are now Kronecker's delta, it becomes clear that the tensor product structure that characterizes quantum systems separated in space has been extended to time. One way of seeing this is to notice that $a_{t,\textbf{x}}\equiv \mathbbm{1}\otimes \mathbbm{1}\dots a_{\textbf{x}}\otimes \mathbbm{1}\dots$, namely the operator $a_{\textbf{x}}$ acting on the time-slice $t$. This is precisely the structure that allows one to define standard entanglement in continuum variables systems \cite{eisert2010colloquium}. Hence, at least for discrete spacetime we can write $\mathcal{H}\simeq \otimes_{t}\otimes_{\textbf{x}}\mathcal{H}^{\text{single mode}}_{t,\textbf{x}}$, for $\mathcal{H}^{\text{single mode}}$ the Hilbert space corresponding to a single bosonic mode defined by $[a,a^\dag]=1$. Notably, the Fock space where $h$ is the building object applies a tensor product structure both across space and time. Moreover, $\mathcal{H}\simeq \otimes_{t}[\otimes_{\textbf{x}}\mathcal{H}^{\text{single mode}}_{t,\textbf{x}}]\simeq\mathcal{H}_{\text{space}}^{\otimes N}$ (for $N$ time slices), namely the extended Fock space is isomorphic to the Hilbert space obtained by considering the standard Hilbert space $\mathcal{H}_{\text{space}}$ defined by equal-times commutator relations and copying it across time slices.

Notice now that while we used the ``position'' basis we can also define ladder operators $a(p)$ such that $|p\rangle\equiv a^\dag(p)|\Omega\rangle$ with $|p\rangle$ defined in the previous section. The relation between the position and momentum ladder operators is given by the spacetime Fourier transform
\begin{equation}\label{eq:axap}
    a(x)=\int \frac{d^Dp}{(2\pi)^D}e^{-ipx}a(p)\,,
\end{equation}
One may also consider independent Fourier transforms for each dimension to define e.g. $a(t,\textbf{p})$. Notice that since the operators $a(x)$ and $a(p)$ are linearly related and no creation operator is involved in Eq.\ \eqref{eq:axap} they share the same vacuum $|\Omega\rangle$ (in other words, they are related by a vacuum preserving Bogoliubov transformation; this needs not to hold when considering ladder operators defined as normal modes of a field Hamiltonian, see below). Remarkably, since the measures $d^Dp, d^Dx$ are clearly invariant under Lorentz transformations, and so are the product $px=p^\mu x_\mu$ and the $D$-dimensional Dirac delta, we can represent Lorentz transformations explicitly: If we define $L_{\mu\nu}:=\int d^Dx\, a^\dag(x) (x_\mu\partial_\mu-x_\nu \partial_\mu)a(x)=\int \frac{d^Dp}{(2\pi)^D}\, a^\dag(p) (p_\mu\partial_\mu-p_\nu \partial_\mu)a(p)$ then
\begin{equation}
U(\Lambda):=\exp\left(i\omega_{\mu\nu}L^{\mu\nu}\right)
\end{equation}
implements Lorentz transformations under adjoint action, i.e. $U^\dag(\Lambda)a(x)U(\Lambda)=a(\Lambda x)$ and   $U^\dag(\Lambda)a(p)U(\Lambda)=a(\Lambda p)$, for $\Lambda=e^\omega$. Clearly, $U(\Lambda)$ is unitary and preserves the spacetime algebra \ref{eq:basicalg} as well as 
\begin{equation}\label{eq:momentumalg}
    [a(p),a^\dag(k)]=(2\pi)^D\delta^{(D)}(p-k)\,.
\end{equation}

Having identified the structure underlying the extended second quantization, let us now discuss how QT operators are generalized to this setting. Fortunately, the second quantization procedure provides us with a rigorous and practical recipe: Given a general sp operator
$O=\sum_{i,j}\langle i|O|j\rangle|i\rangle \langle j|$ the corresponding Fock space operator, defined as the sum of $O$ operators corresponding to  different particles
(e.g. the total kinetic energy), is given by 
$
    O_F:=\sum_{i,j} \,\langle i|O|j\rangle \,a_i^\dag a_j\,.
$
We refer the reader to \cite{schwabl2008advanced} for a rigorous and particularly clear discussion on second quantization from a modern point of view. We can apply the same procedure to operators $O(t)$ defined in Eq. \eqref{eq:Ot} which for a Klein-Gordon particle reads $O(t)=\int \frac{d^dp}{(2\pi)^d} \frac{d^dp'}{(2\pi)^d}  O(\textbf{p},\textbf{p}')|t,\textbf{p}\rangle \langle t,\textbf{p}'|$.  As a result one obtains,
\begin{equation}
   O(t)\to O_F(t)=\int \frac{d^dp}{(2\pi)^d} \frac{d^dp'}{(2\pi)^d} O(\textbf{p},\textbf{p}')a^\dag(t,\textbf{p})a(t,\textbf{p}')\,.
\end{equation}
Interestingly, standard sp operators that in the PaW approach implement time conditioning at time $t$ get generalized to local-in-time operators. In fact, ignoring subtleties of the continuum (see comments above), we can state that $O_F(t)$ is indeed acting on the copy $t$ of the standard Fock space defined by ladder operators $a(\textbf{p}), a^\dag(\textbf{p})$. This notion of locality of operators is very relevant for the discussion of section \ref{sec:SCM}.

Just as we applied the second quantization recipe to $O(t)$ we may apply it to another fundamental sp operator: the universe operator $H_S$. By expanding $H_S$ in the $(t,\textbf{p})$ basis we can write
\begin{equation}\label{eq:freekgaction}
\begin{split}
      H_S&=-\int dt \int \frac{d^dp}{(2\pi)^d}|t,\textbf{p}\rangle [-i\partial_t +\sqrt{\textbf{p}^2+m^2}] \langle t,\textbf{p}|\\&\to \mathcal{S}:= \int dt \int \frac{d^dp}{(2\pi)^d}\left[a^\dag(t,\textbf{p}) i\dot{a}(t,\textbf{p}) -\sqrt{\textbf{p}^2+m^2}\,a^\dag(t,\textbf{p}) a(t,\textbf{p})\right]\,,
\end{split}
\end{equation}
with $\mathcal{S}\equiv (H_S)_F$, i.e. the Fock space version of $H_S$. The careful reader may  have noticed a rather peculiar structure in $\mathcal{S}$: The second term is formally $-\int dt\, H$ with $H$ the integral over time of the quantum Hamiltonian corresponding to a free Klein-Gordon field. Moreover, the first term has a Legendre transform structure with $ia^\dag$ the conjugate variable of $a$. In other words, $\mathcal{S}$ has precisely the structure of an \emph{action}. To make further progress let us define field operators  
\begin{equation}\label{eq:fields}
    \begin{split}
        \phi(x):&=\int \frac{d^Dp}{(2\pi)^D}\frac{1}{\sqrt{2 E_\textbf{p}}}\Big(a(p)e^{-ipx}+\text{h.c}\Big)\\
        \pi(x):&=-i\int \frac{d^Dp}{(2\pi)^D}\sqrt{\frac{ E_\textbf{p}}{2}}\Big(a(p)e^{-ipx}-\text{h.c}\Big)
    \end{split}\,.
\end{equation}
It is a straightforward exercise to invert these relations and  rewrite
    \begin{equation}
    \mathcal{S}=\int d^Dx \, \left(\pi(x)\dot{\phi}(x)-\frac{\pi^2(x)}{2}-\frac{(\nabla \phi(x))^2}{2}-\frac{m^2\phi^2(x)}{2} \right)\,.
\end{equation}
We reach the following remarkable conclusion: \emph{The many-particle generalization of the universe operator of a Klein-Gordon particle is a quantum version of the Klein-Gordon field free action}. This result has also been presented recently for Dirac particles/fields \cite{diaz2025spacetime} and holds for non-relativistic systems as well \cite{diaz2021spacetime}.

Let us make a few conceptual comments about the previous important result. First of all, let us emphasize that $\mathcal{S}$ is quantum operator, which for obvious reasons we denote as \emph{quantum action}. Secondly, we stress that $\mathcal{S}$ is only a meaningful object in this extended setting: Since
\begin{equation}\label{eq:stfieldalg}
    [\phi(x),\pi(y)]=i\delta^{(D)}(x-y)\,,
\end{equation}
as one can easily check using the algebra of $a(p)$ \footnote{As a technical remark, we clarify that the relation between $a(x)$ in Eq.\ \eqref{eq:axap} and the field operators of Eq.\ \eqref{eq:fields} involves a non-trivial Bogoliubov transformation.}, fields at different time points are independent and the structure of $\mathcal{S}$ cannot be simplified. In a sense, this mimics the situation in classical physics where the action is defined by generic fields without assuming any equation of motion as they are instead derived from the formal structure of the action.  Moreover, the ladder operators that appear in the field expansions \eqref{eq:fields} are the one creating/annihilating PaW-like particles. We have thus successfully identified the proper QFT scenario  that reinterprets PaW-like particles as field excitations. At the same time, while in standard QFT one may formally think of the Hilbert space as the space of field configurations at a given time, with an eigenbasis $|\phi(\textbf{x})\rangle$ of operators $\phi(\textbf{x})$, we can think of $\mathcal{H}$ as the space of field configurations in spacetime, with eigenbasis  $|\phi(x)\rangle$ such that $\hat{\phi}(x)|\phi(x)\rangle= \phi(x)|\phi(x)\rangle$ which follows from Eq.\ \eqref{eq:stfieldalg} (up to the standard continuum subtleties). 
These comments are illustrated in Figure \ref{fig:fig1}.

\subsection{Limitations of a constrained equation in the extended Fock space}\label{sec:limitations}

Now that we have generalized the main QT operators to a many-particle scenario, we discuss the possibility of extending the notions of physical subspace and conditioning to this setting. As we discussed in section \ref{sec:sp}, these are the basic ingredients that in the PaW approach allow one to recover standard unitary evolution. Since it is not a priori clear how to treat evolution in a Hilbert space applying a tensor product structure across time, it is natural to leverage what we know from QT schemes.

First of all, let us notice that the sp physical state of Eq.\ \eqref{eq:physm} now can be written as
\begin{equation}
 |\Psi\rangle_m=  \int \frac{d^dp}{(2\pi)^d}\psi(\textbf{p})a^\dag(E_{\textbf{p}m},\textbf{p})|\Omega\rangle= \int dt\int \frac{d^dp}{(2\pi)^d}e^{-iE_{\textbf{p}m}t}\psi(\textbf{p})a^\dag(t,\textbf{p})|\Omega\rangle\,.
\end{equation}
Interestingly, the creation operator $a^\dag(E_{\textbf{p}m},\textbf{p})$ corresponds to an on-shell particle. This suggests to define the physical subspace as the space generated by on-shell modes such that, e.g. a two particle state looks like
\begin{equation}
\begin{split}
     |\Psi^{(2)}\rangle_m&=  \int \frac{d^dp}{(2\pi)^d}\frac{d^dp'}{(2\pi)^d}\psi(\textbf{p},\textbf{p}')a^\dag(E_{\textbf{p}m},\textbf{p})a^\dag(E_{\textbf{p}'m},\textbf{p}')|\Omega\rangle\\&= \int dtdt' \int \frac{d^dp}{(2\pi)^d}\frac{d^dp'}{(2\pi)^d}\psi(\textbf{p},\textbf{p}')e^{-i(E_{\textbf{p}m}t+E_{\textbf{p}'}mt')}a^\dag(t,\textbf{p})a^\dag(t',\textbf{p}')|\Omega\rangle\,.
\end{split}
\end{equation}
We see that a multi-time-like structure, not present for sp QT states, emerges. At the same time, the structure of multi-particle states is simple to understand in the mass-energy basis where a correspondence to standard QFT states is manifest as long as the on-shell condition is imposed strictly (namely, the wavefunctions $\psi$ only depend on the spatial part). The condition on physical states can be stated as follows: Physical modes are defined by the condition 
\begin{equation}\label{eq:scommphys}
    [\mathcal{S},a^\dag(p)]=0\,,
\end{equation}
which yields $[\mathcal{S},a^\dag(p)]=(p^0-E_{\textbf
{p}m})a^\dag(p)=0$ holding only for $p^0=E_{\textbf{p}m}$. Then, the physical subspace is the linear space of states built from the action of physical creation operator on the vacuum. In particular, this implies $\mathcal{S}|\Psi\rangle=0$ thus generalizing the universe operator condition for single particles. We also remark that the physical condition gives the quantum action a central role.

Let us now explore the consequences of the previous construction. Consider the expectation values of an operator $O_F(t)$. By the second quantization construction, ${}_m\langle \Psi| O_F(t)|\Psi\rangle_m$ is equal to the sp mean value which, at the same time, is equal to $\langle \psi(t)|O|\psi(t)\rangle$ as we recalled in Eq.\ \eqref{eq:conditioning}.  Let us verify this explicitly
\begin{equation}\label{eq:contractionOt}
\begin{split}
      {}_m\langle \Psi| O_F(t)|\Psi\rangle_m=\int &\frac{d^dp}{(2\pi)^d} \frac{d^dp'}{(2\pi)^d}  \frac{d^dk}{(2\pi)^d} \frac{d^dk'}{(2\pi)^d} \psi^\ast(\textbf{p}') O(\textbf{k}',\textbf{k})\psi(\textbf{p})\times \\&\;\langle \Omega| \wick{\c1 a(E_{\textbf{p}'m},\textbf{p}') \c1 a^\dag(t,\textbf{k}') \c2 a(t,\textbf{k}) \c2 a^\dag(E_{\textbf{p}m},\textbf{p})}|\Omega\rangle\\
      =\int &\frac{d^dp}{(2\pi)^d} \frac{d^dp'}{(2\pi)^d}  \psi^\ast(\textbf{p}',t) O(\textbf{p}',\textbf{p})\psi(\textbf{p},t)\,,
\end{split}
\end{equation}
which is equal to $\langle \psi(t)|O|\psi(t)\rangle$ for $|\psi(t)\rangle=\int \frac{d^dp}{(2\pi)^d}\psi(\textbf{p})e^{-iE_{\textbf{p}m}t}a^\dag(\textbf{p})|0\rangle$ ($\psi(\textbf{p},t)=\psi(\textbf{p})e^{-iE_{\textbf{p}m}t}$) which is a sp state evolved by an external time and corresponding to standard (non-extended) ladder operators defined by $[a(\textbf{p}),a^\dag(\textbf{k})]=(2\pi)^d \delta^{(d)}(\textbf{p}-\textbf{k})$ (with similar considerations for $O$ and with $a(\textbf{p})|0\rangle=0$). Here the contractions are defined as $\wick{\c1 A \c1 B}:=\langle \Omega|AB|\Omega\rangle$ and the fact that $|\Omega\rangle$ is gaussian allows one to apply Wick's theorem (no time-ordering is involved here). We notice that all internal contractions within $O_F(t)$ vanish ($\wick{\c1 a^\dag(t,\textbf{k}') \c1 a(t,\textbf{k})}=\langle \Omega|a^\dag(t,\textbf{k}')  a(t,\textbf{k}) |\Omega\rangle=0$), so that  the only relevant is the one considered in Eq.\ \eqref{eq:contractionOt}.
We see that the mean value computed within $\mathcal{H}$ is equal to the corresponding mean value computed within standard QFT. This is a direct generalization of the PaW mechanism to Fock space.

Interestingly, if we consider a two particle mean value the sp result gets immediately generalized $
    {}_m\langle \Psi^{(2)}|O_F(t)|\Psi^{(2)}\rangle_m=\langle \psi^{(2)}(t)|O|\psi^{(2)}(t)\rangle
$
for $|\psi^{(2)}(t)\rangle$ the two particle state corresponding to standard ladder operators. The strategy to prove it is the same as before: One consider contractions between the physical ``external'' ladder operators and the operators at a single time so that the proper phases of standard evolution arise perfectly mimicking the contractions in the corresponding evaluation in standard QFT \footnote{Contractions among external operators require a regularization, either by mass uncertainty or by considering a finite time window. Importantly, this subtlety originates from the use of continuum unbounded operators and it is easily overcome. }
The same considerations hold for arbitrary physical states and normal-ordered many-particle operators acting on a single time slice.

The problem arises when  we consider expectation values of operators that are not naturally normal-ordered. In particular, such operators don't follow from a direct second quantization of sp ones but are perfectly valid operators in Fock space. The issue can be already seen in a very simple example: Consider 
$O'_F(t)=\int \frac{d^dp}{(2\pi)^d} \frac{d^dp'}{(2\pi)^d} O(\textbf{p}',\textbf{p})a(t,\textbf{p}')a^\dag(t,\textbf{p})$ such that $\langle \Omega| O'(t)_F|\Omega\rangle\neq 0$. We now obtain 
\small
\begin{equation}\label{eq:contractionOtwrong}
\begin{split}
      {}_m\langle \Psi| O'_F(t)|\Psi\rangle_m=\int &\frac{d^dp}{(2\pi)^d} \frac{d^dp'}{(2\pi)^d}  \frac{d^dk}{(2\pi)^d} \frac{d^dk'}{(2\pi)^d} \psi^\ast(\textbf{p}') O(\textbf{k}',\textbf{k})\psi(\textbf{p})\times \\&\!\!\langle \Omega|\Big[ \wick{\c1 a(E_{\textbf{p}'m},\textbf{p}') \c2 a(t,\textbf{k}') \c1 a^\dag(t,\textbf{k}) \c2 a^\dag(E_{\textbf{p}m},\textbf{p})}+ \wick{\c1 a(E_{\textbf{p}'m},\textbf{p}') \c2 a(t,\textbf{k}') \c2 a^\dag(t,\textbf{k}) \c1 a^\dag(E_{\textbf{p}m},\textbf{p})}\Big]|\Omega\rangle\,,
\end{split}
\end{equation}
\normalsize
where an internal contraction now arises in the second term. While the first term may be easily related to a standard contraction, the second term contains two new problematic elements. First, an external divergent contraction. 
Second, a contraction among local in time operators evaluated at equal times.
The problem with these terms  is not their regularization. As a matter of fact, if this computation is replaced by its discrete counterpart we still don't obtain the proper result: Schematically, the external momentum divergence corresponds to the time length $\delta(0_p)\equiv T$, while the time contraction to $\delta(0_t)\equiv 1/\epsilon$ (with $\epsilon$ the time spacing) so the overall divergence on the second term in a discrete time setting gets replaced by $\delta(0_p)\delta(0_t)\equiv T/\epsilon=N$, i.e. by the amount of time slices. One can easily and rigorously verify these considerations by developing the discrete time version of the formalism and checking the corresponding contractions (we refer the reader to the discussions in \cite{diaz2023spacetime} and Appendix \ref{app:discretequantum} for standard Fourier based discretization schemes). The important point is that the second term contains an additional factor $N$ with respect to the first. In a discrete setting this is to be expected since the first term contains contractions among Fourier-related modes yielding a factor $N^{-1/2}$ each,  while the second only contains contractions among the same modes. On the other hand, in the standard QFT formulation there is no Fourier-in-time to be considered and the two types of contractions are equivalent. Explicitly we obtain ${}_m\langle \Psi| O'_F(t)|\Psi\rangle_m=\int \frac{d^dp}{(2\pi)^d} \frac{d^dp'}{(2\pi)^d}  \psi^\ast(\textbf{p},t) O(\textbf{p}',\textbf{p})\psi(\textbf{p}',t)+\mathcal{O}(N)$ while 
$\langle \psi(t)| O|\psi(t)\rangle_m=\int \frac{d^dp}{(2\pi)^d} \frac{d^dp'}{(2\pi)^d}  \psi^\ast(\textbf{p},t) O(\textbf{p}',\textbf{p})\psi(\textbf{p}',t)+\mathcal{O}(1)$. For more complicated operators and/or states these anomalies accumulate and further misalignment follows.  In this sense, the PaW conditioning scheme of Eq.\ \eqref{eq:conditioning} does not hold for arbitrary many-body operators.

The mismatch appears precisely when internal contractions are present. This is not a minor issue: It implies that even elementary local operators, such as 
$\phi(x)\phi(x)$, cannot be mapped consistently between the two formulations. In addition, normal ordering is intrinsically vacuum dependent as the relevant notion of ``vacuum'' generally changes in the presence of interactions. Any general, consistent formalism must therefore treat internal contractions properly.

Before moving on, it is useful to comment on a few potential workarounds and caveats. A first possibility is to normal-order operators before performing the map. However, this is inevitably theory dependent, since both the vacuum and the associated normal ordering depend on the dynamics. A conceptually cleaner alternative is to abandon conditioned operators in favor of ``conditioned states'', an approach explored for many-body free theories in \cite{diaz2021spacetime,giovannetti2023geometric}. The drawback is that this process effectively washes out the spacetime structure we seek to maintain: The resulting state becomes (up to an isomorphism) a standard quantum-mechanical state. Moreover, this approach introduces other complications: The continuum-time limit produces divergences; a classical choice of reference frame must be specified (though it can be imposed a posteriori, unlike in canonical quantization), and, for interacting theories, it is not apparent how to define an appropriate physical subspace while remaining consistent to the conditioning prescription.

Our standpoint is that these obstructions reflect a more basic issue: The correct constraint structure for the many-body theory has not yet been identified. In the single particle case, the universe equation arises unambiguously as a first-class constraint already at the classical level, and its quantum version follows from Dirac quantization. By contrast, in the many-body setting the notion of  physical subspace has not been derived from an underlying constrained theory; rather, it has  been introduced as a plausible ansatz motivated by analogy and by possible generalizations of the single particle QT scheme.  As will become clearer below, once the appropriate structure is taken into account, a direct Dirac quantization is not viable. The quantum-action-based construction proposed afterwards circumvents that route and provides a solution naturally tied to the requirement that spacelike and timelike correlators be treated on the same footing, a point that is further clarified in section ~\ref{sec:ststates}. One may, of course, relax that requirement and restrict attention to standard correlations obeying Born's rule. In that case, alternative formalisms could arise, but precisely at the price of giving up part of the spacetime structure that the present construction is meant to preserve.

\section{Spacetime classical and quantum mechanics}

\subsection{Classical mechanics with spacetime Poisson Brackets}\label{sec:SCM}

In light of the above, we now introduce a classical spacetime formulation that may be regarded as the classical counterpart of the previous second quantization scheme \footnote{We remark that the formalism presented here is unrelated to applying the original Dirac scheme to a field theory. Just as we did not apply the PaW to standard QFT in section \ref{sec:secondq}. This set our approach apart from other schemes such as parameterized field theories \cite{isham1985representations,  kiefer2012quantum} and their recent PaW-like reinterpretation \cite{hoehn2023matter}. In particular, the characteristic problem of the latter arising beyond the $1+1$ dimensional case won't arise here. }. This formalism will be used to identify the missing mathematical structure in our previous proposal thus explaining the anomalies and, through additional considerations, developing the proper formalism.

We begin by introducing the classical version of the algebra in Eq.\ \eqref{eq:stfieldalg}. Namely, we 
define spacetime PBs 
\begin{equation}\label{eq:stpbs}
   \{f,g\}:=\int d^Dx\, \left(\frac{\delta f}{\delta \phi(x)}\frac{\delta g}{\delta \pi(x)}-\frac{\delta f}{\delta \pi(x)}\frac{\delta g}{\delta \phi(x)}\right)
\end{equation}
such that
\begin{equation}
    \{\phi(x),\pi(y)\}=\delta^{(D)}(x-y)\,.
\end{equation}
Notice that at this stage, nothing distinguishes space from time. We simply have a $D$-dimensional manifold, on which independent fields lie.

Let us now consider the action of a scalar field in phase-space variables
\begin{equation}
    S=\int d^Dz \left[\pi(z)\dot{\phi}(z)-\frac{\pi^2(z)}{2}-\frac{1}{2}(\nabla \phi(z))^2-\mathcal{V}(\phi(z))\right]\,.
\end{equation}
A direct calculation, employing the spacetime PBs, leads to 
\begin{align}
      \{\phi(x),S\}&=\dot{\phi}(x)-\pi(x)\\
      \{\pi(x),S\}&=\dot{\pi}(x)-\nabla^2\phi(x)+\mathcal{V}(\phi(x))\,.
\end{align}
This shows that \emph{Hamilton equations} correspond to the weak constraints
\begin{align}\label{eq:hamclasconstraints}
      \{\phi(x),S\}&=\dot{\phi}(x)-\pi(x)\approx 0\\
      \{\pi(x),S\}&=\dot{\pi}(x)-\nabla^2\phi(x)+\mathcal{V}(\phi(x))\approx 0\,.
\end{align}
Thus, in this setting, the fields corresponding to different points in spacetime are completely independent, as suggested by our definition of spacetime PBs. Yet, conventional evolution can be easily recovered by imposing that physical fields lie within the manifold specified by the simple condition 
\begin{equation}
    \{S,O\}\approx 0\,.
\end{equation}
 We recall that this is a weak equation, since we first evaluate the PBs and then we impose the condition. Clearly, a combination of Eqs.\ \eqref{eq:hamclasconstraints} leads to the Klein-Gordon equation for arbitrary potentials
\begin{equation}
    \Box \phi(x)+\mathcal{V}'(\phi(x))\approx 0\,.
\end{equation}

It is easy to show that this approach is not restricted to fields but can be applied to any classical system. In general, imposing the null PB with the action leads to Hamilton equations (see \cite{diaz2023spacetime} and the Appendix for more details).

\subsection{Quantization: the problem with Dirac's approach}\label{sec:DiracquantizationSCM}

We have shown that by defining spacetime brackets the standard Hamilton equations can be easily recovered as constraints thus allowing for an alternative formulation of classical mechanics. However, it is not difficult to show that the infinite set of constraints (one for each spacetime point) $\{\phi(x), S\}$, $\{\pi(x), S\}$ are of the second class, namely their PBs are not a linear combination of the original constraints. This leads to a very different approach than the one considered in section \ref{sec:sp}: Following Dirac, we recall that second class constraints cannot be imposed a posteriori via subspaces in Hilbert space. To illustrate this, let us recall the standard example of particle in the plane $x,y$ constrained to a line according to  $\phi_1=q_x\approx 0$, $\phi_2=p_x\approx 0$. The constraints satisfy $\{\phi_1, \phi_2\}=1\neq \alpha \phi_1+\beta \phi_2$. If we apply canonical quantization in the plane  such that $[q_x,p_x]=i$, $[q_y,p_y]=i$ (with other commutators vanishing) and then impose $q_x |\psi\rangle_{\text{phys}}=p_x|\psi\rangle_{\text{phys}}=0$ we immediately reach an incompatibility as $[q_x,p_x]|\psi\rangle_{\text{phys}}=0$ can only be satisfied if $|\psi\rangle_{\text{phys}}=0$. Notice that these conditions are also equivalent to imposing $a_x^\dag |\psi\rangle_{\text{phys}}=0$ for $[a_x, a_x^\dag]=1$ ladder operators linearly related to $q_x,p_x$.

In this scenario, where quantization leads to incompatibilities, Dirac proposes to replace PBs by Dirac brackets and only then quantize. Dirac brackets are defined as follows: Given a set of constraints $\phi_a$ so that $\{\phi_a,\phi_b\}=C_{ab}$ with $C$ an invertible matrix, the Dirac brackets between phase-space functions $f,g$ is 
\begin{equation}
\{f,g\}_{D.B}:=\{f,g\}-\sum_{a,b}\{f,\phi_a\}C^{-1}_{ab}\{\phi_b,g\}\,.
\end{equation}
In the previous example a straightforward calculation  leads to $\{q_x,p_x\}_{D.B}=0$. This means that only the variables $q_y, p_y$ are quantized and the quantum description of the particle is one-dimensional. It is crucial to realize that the number of dynamical variables has been reduced.

Let us now return to our spacetime scheme. To be more concrete, consider the free theory example of section \ref{sec:limitations}. We can  write
\begin{equation}
    S=\int \frac{d^4p}{(2\pi)^4} \, (p^0-\sqrt{\textbf{p}^2+m^2}) a^\ast(p) a(p)
\end{equation}
for
\begin{equation}\label{eq:classicalfieldsexpansion}
        \phi(x)=\int \frac{d^4p}{(2\pi)^4 \sqrt{2E_{\textbf{p}}}}\left( a(p)e^{-ipx}+a^\ast(p)e^{ipx}\right)\,,\quad
            \pi(x)=-i\int \frac{d^4p}{(2\pi)^4}\sqrt{\frac{E_\textbf{p}}{2}}\left( a(p)e^{-ipx}-a^\ast(p)e^{ipx}\right)\,,
\end{equation}
with the fields and $a(p), a^\ast(p)$ classical variables. The latter satisfy the extended bracket rule
$
    \{a(p),a^\ast(p')\}=-i(2\pi)^4\delta^{(4)}(p-p')
$
which is the classical analogue of \eqref{eq:momentumalg}. Now the constraints can be restated as
\begin{align}
   \phi_1(p):= \{a(p),S\}&=(p^0-E_{\textbf{p}})a(p)\approx 0\\
    \phi_2(p):=  \{a^\ast(p),S\}&=-(p^0-E_{\textbf{p}})a^\ast(p)\approx 0\,,
\end{align}
which must hold for all values of $p$. On the other hand, their PBs is given by  
\begin{equation}
    \{\phi_1(p),\phi_2(p')\}=i(p^0-E_{\textbf{p}})^2 (2\pi)^4\delta^{(4)}(p-p')\neq 0\,.
\end{equation}
The situation parallels our example of the particle in a plane constrained to a line.

If we ignore the problem and try to impose half of the conditions we are left, after quantization, with the constraints
\begin{equation}
 \phi_1(p)|\Psi\rangle_{\text{Phys}}=[\mathcal{S}, a(p)]|\Psi\rangle_{\text{Phys}}=0 \Rightarrow (p^0-E_{\textbf{p}})a(p)|\Psi\rangle_{\text{Phys}}=0.
\end{equation}
This equation can only hold for all $p$ if $|\Psi\rangle_{\text{Phys}}$ only contains on-shell particles. Thus we have recovered Eq.\ \eqref{eq:scommphys} from section \ref{sec:limitations}. However, a complete description of the system requires to also impose the constraints $\phi_2(p)$ which is incompatible. Let us discuss why the pathologies we described in section \ref{sec:limitations} are precisely a manifestation of having employed only half the constraints: Essentially, the constraint $\phi_1(p)$ implies $a(p)|\Psi\rangle_{\text{Phys}}\sim \delta_{p^0,E_{\textbf{p}}}a(p)|\Psi\rangle_{\text{Phys}}$ leading to $a(t,\textbf{p})|\Psi\rangle_{\text{Phys}}\sim e^{-iE_{\textbf{p}}t}a(E_{\textbf{p}},\textbf{p})|\Psi\rangle_{\text{Phys}}$, namely, up to continuum time related subtleties, $a(t,\textbf{p})$ acts on a physical state as the on-shell particle evolved to time $t$. The same idea holds for multiple annihilation operators acting on physical kets and for creation operators acting on physical bras. This effectively replaces the physical mean value of a local in time operator with the standard mean value and standard evolution.
Unfortunately, having  $a^\dag(t,\textbf{
p})|\Psi\rangle_{\text{Phys}}\simeq  e^{iE_{\textbf{p}}t}a^\dag(E_{\textbf{p}},\textbf{p})|\Psi\rangle_{\text{Phys}}$ would require $\phi_2(p)|\Psi\rangle_{\text{Phys}}=0$. This explains why only normal-ordered operators yield proper physical predictions: \emph{A naive attempt at generalizing the QT notion of physical subspace to many particles ignores a crucial part of the constraint structure that is required to recover standard QFT}.

Due to the nature of these constraints it would seem that it is not possible to properly quantize SCM. Let us convert this intuition in a precise mathematical no-go 
consequence of applying Dirac quantization scheme. 
\begin{Theorem}
     Dirac quantization scheme applied to the spacetime classical formalism leads to standard QFT. 
\end{Theorem}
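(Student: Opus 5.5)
The strategy is to carry out Dirac's procedure for second-class constraints literally on the SCM phase space: compute the constraint matrix $C$, build the Dirac brackets, and show that these brackets coincide with the equal-time Poisson brackets of the standard canonical formalism. Quantizing the Dirac brackets then produces equal-time commutators, i.e. standard QFT. I will work first with the free theory in the mode variables $a(p),a^\ast(p)$ of Eq.\ \eqref{eq:classicalfieldsexpansion}, where $C$ is diagonal, and then argue that the conclusion is general.

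\emph{Step 1: constraint matrix and the off-shell modes.} For the free theory, $\phi_1(p),\phi_2(p)$ have bracket $i(p^0-E_{\textbf p})^2(2\pi)^4\delta^{(4)}(p-p')$. This is invertible only where $p^0\neq E_{\textbf p}$, and it vanishes on the mass shell. For off-shell $p$ the two constraints are simply equivalent to $a(p)\approx 0$ and $a^\ast(p)\approx 0$. This mirrors the particle-on-a-line example, where $\phi_1=q_x$ and $\phi_2=p_x$.

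\emph{Step 2: Dirac brackets.} I will compute $\{f,g\}_{D.B}$ with $C^{-1}_{ab}(p,p')\propto (p^0-E_{\textbf p})^{-2}\delta^{(4)}(p-p')$, using $\{a(p),\phi_2(p')\}\propto(p^0-E_{\textbf p})\delta$. The factors of $(p^0-E_{\textbf p})$ cancel, so the correction term exactly removes $\{a(p),a^\ast(p')\}$ off shell. I therefore expect
\begin{equation}
\{a(p),a^\ast(p')\}_{D.B}=-i(2\pi)^4\,\delta\big(p^0-E_{\textbf p}\big)\,\frac{(\ldots)}{(\ldots)}\,\delta^{(3)}(\textbf p-\textbf p'),
\end{equation}
with support only on shell. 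Equivalently, only $a(\textbf p):=a(E_{\textbf p},\textbf p)$ survive, with $\{a(\textbf p),a^\ast(\textbf p')\}_{D.B}\propto -i(2\pi)^3\delta^{(3)}(\textbf p-\textbf p')$ after the natural rescaling.

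\emph{Step 3: identify the reduced bracket.} Substituting the on-shell expansions into Eq.\ \eqref{eq:classicalfieldsexpansion}, I will show that
\begin{equation}
\{\phi(t,\textbf x),\pi(t,\textbf y)\}_{D.B}=\delta^{(3)}(\textbf x-\textbf y),
\end{equation}
with unequal-time brackets fixed by the solution of the equations of motion. This is exactly the canonical equal-time structure, just as the Dirac bracket eliminated $q_x,p_x$ in the planar example. Quantizing then yields $[a(\textbf p),a^\dag(\textbf p')]=(2\pi)^3\delta^{(3)}(\textbf p-\textbf p')$, i.e. the standard Fock space, with the time dependence carried by Heisenberg evolution. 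The QT time degree of freedom is thus lost.

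\emph{Step 4: general argument.} For interacting $\mathcal V$ I will use that the constraints $\{\phi(x),S\}\approx0$ and $\{\pi(x),S\}\approx0$ are exactly Hamilton's equations. The constraint surface is therefore parametrized by Cauchy data $(\phi(t_0,\textbf x),\pi(t_0,\textbf x))$. The Dirac bracket is the symplectic form induced on this second-class surface, and I will show it equals the pullback of the spacetime symplectic form restricted to solutions, which is the standard covariant-phase-space (Peierls) bracket. At equal times this is the canonical bracket. A convenient route is to discretize time as in Appendix \ref{app:discretequantum}, so that $C$ becomes a finite matrix, and compute the reduction slice by slice.

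The main obstacle is the degeneracy of $C$ on shell. There $C$ is not invertible, so strictly speaking the on-shell constraints are first class (they vanish identically) and must be separated out. I will handle this by the discrete-time regularization, or by a small mass smearing, and then check that the on-shell measure normalizes to the $d$-dimensional delta. I expect this normalization, the analogue of the $T/\epsilon$ counting in Section \ref{sec:limitations}, to be the delicate step.
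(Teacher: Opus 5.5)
Your Steps 1--3 are essentially the paper's proof. The paper regularizes in a finite $D$-dimensional box and inverts the block-diagonal $C$ mode by mode wherever $\Delta_{\textbf n}\neq 0$, which gives $\{a_{\textbf n},a^\ast_{\textbf n'}\}_{D.B}=0$ off shell. It then notes that on-shell modes carry no constraint at all, so DB $=$ PB there; this is all it does about the degeneracy you worry about. Finally it reads off $\{\phi(t,\textbf x),\pi(t,\textbf y)\}_{D.B}=\delta^{(d)}(\textbf x-\textbf y)$. Your Step 4 for interacting $\mathcal V$ goes beyond what the paper proves.

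One caution on Step 4 and on the normalization you flagged. Since $\omega_t=\int d^dx\,\delta\pi\wedge\delta\phi$ is $t$-independent on solutions, the pullback of $\int d^Dx\,\delta\pi\wedge\delta\phi$ to the constraint surface is $T\,\omega$, not $\omega$. So the honest Dirac bracket is the Peierls/equal-time bracket divided by the time extent $T$; in modes, you get a factor $2\pi\delta(p^0-E_{\textbf p})/T$ rather than $2\pi\delta(p^0-E_{\textbf p})$. The paper absorbs this factor silently into its continuum formula. You should instead state it explicitly as a rescaling of $\hbar$ or of the fields, rather than claim exact equality with the Peierls bracket.
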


\begin{proof}
In order to show this let us regularize the theory by considering a finite ($D$-dimensional) box with sides of length $T$. The Fourier modes are now defined by finite indices and frequencies $\omega_{\textbf{n}}=\frac{2\pi}{T}(n_0+n_1+...+n_d)$ with $n_i\in \mathbbm{Z}$ and $\textbf{n}=(n_0,n_1,...,n_d)$. The PBs read $\{a_{\textbf{n}},a^\ast_{\textbf{n}'}\}=-i\delta_{\textbf{n}\textbf{n}'}$. Consider the discretized constraints
\begin{equation}
\phi_{1,\textbf{n}}:=\Delta_{\textbf{n}}\,a_{\textbf{n}},\qquad 
\phi_{2,\textbf{n}}:=\Delta_{\textbf{n}}\,a^\ast_{\textbf{n}},
\end{equation}
with $\Delta_{\textbf{n}}=\frac{2\pi n_0}{T}-\sqrt{\frac{4\pi^2}{T^2}(n_1^2+n_2^2+...+n_d^2)+m^2}$. It follows that $\{\phi_{1,\textbf{n}},\phi_{1,\textbf{n}'}\}=0=\{\phi_{2,\textbf{n}},\phi_{2,\textbf{n}'}\}$, while
$
\{\phi_{1,\textbf{n}},\phi_{2,\textbf{n}'}\}=\Delta_{\textbf{n}}\Delta_{\textbf{n}'}\{a_{\textbf{n}},a^\ast_{\textbf{n}'}\}
=-i\,\Delta_{\textbf{n}}^{2}\delta_{\textbf{n}\textbf{n}'}$ and $\{\phi_{2,\textbf{n}},\phi_{1,\textbf{n}'}\}=+i\,\Delta_{\textbf{n}}^{2}\delta_{\textbf{n}\textbf{n}'}.
$
Therefore the matrix $C$ of PBs of the constraints has the block-diagonal structure
\begin{equation}
    C=\bigoplus_{\textbf{n}}\begin{pmatrix}
        0 & -i\Delta_{\textbf{n}}^2\\
        i\Delta_{\textbf{n}}^2 & 0
    \end{pmatrix}\,.
\end{equation}
Notice that on-shell there are no constraints. Instead, for $\Delta_{\textbf{n}}\neq 0$ the inverse exists and is given mode-by-mode, in components,  by
 $(C^{-1})_{(1,\textbf{n}_1)(2,\textbf{n}_2)}=\frac{i}{\Delta_{\textbf{n}_1}^2}\delta_{\textbf{n}_1\textbf{n}_2}$ and $(C^{-1})_{(2,\textbf{n}_1)(1,\textbf{n}_2)}=-\frac{i}{\Delta_{\textbf{n}_1}^2}\delta_{\textbf{n}_1\textbf{n}_2}$, with all other components vanishing. Using the definition of Dirac brackets,
$
\{f,g\}_{D.B}:=\{f,g\}-\sum_{A,B}\{f,\phi_A\}(C^{-1})_{AB}\{\phi_B,g\},
$
with $A=(\alpha,\textbf{n})$ and $\alpha=1,2$, we obtain for $\Delta_{\textbf{n}}\neq 0$ 
\begin{align}
\{a_{\textbf{n}}\,, a^\ast_{\textbf{n}'}\}_{D.B}
&=-i\delta_{\textbf{n}\textbf{n}'}
-\sum_{\textbf{n}_1,\textbf{n}_2}
\big(-i\Delta_{\textbf{n}_1}\delta_{\textbf{n}\textbf{n}_1}\big)
\left(-\frac{i}{\Delta_{\textbf{n}_1}^2}\delta_{\textbf{n}_1\textbf{n}_2}\right)
\big(-i\Delta_{\textbf{n}_2}\delta_{\textbf{n}_2\textbf{n}'}\big)\nonumber\\
&=-i\delta_{\textbf{n}\textbf{n}'}+i\delta_{\textbf{n}\textbf{n}'}=0\,.
\end{align}
We see that for all constraints for which $\Delta_{\textbf{n}}\neq 0$ Dirac scheme demotes the corresponding modes to non-dynamical. Instead, for $p^0=E_{\textbf{p}}$ (returning to continuum notation), there is actually no constraint as $\phi_1(E_\textbf{p},\textbf{p})=\phi_2(E_\textbf{p},\textbf{p})=0$. Thus the only remaining dynamical variables are on-shell modes for which the DBs and PBs are equal.
\end{proof}

Let us make the statement of the theorem more explicit: Since the only dynamical variables are those $a(p),a^\ast(p)$ for which $p^0=E_{\textbf{p}}$ we can write for any integral involving the DBs can be simplified as $\int \frac{d^Dp}{(2\pi)^D} \frac{d^Dp'}{(2\pi)^D} f(p,p')\{a(p),a^\ast(p')\}_{D.B.}
=-i\int \frac{d^dp}{(2\pi)^d} f(E_{\textbf{p}},\textbf{p},E_{\textbf{p}},\textbf{p})$. In particular,
by using Eq.\ \eqref{eq:classicalfieldsexpansion}, which expands the fields as integrals of modes, we obtain 
\begin{equation}
    \{\phi(t,\textbf{x}),\pi(t, \textbf{y})\}_{D.B.}=\delta^{(d)}(\textbf{x}-\textbf{y)}\,,
\end{equation}
which quantized leads to the standard \emph{equal-time} commutation relations and with $\{\phi(t,\textbf{x}),\phi(t', \textbf{y})\}_{D.B}\neq 0$ for $t\neq t'$. Thus fields at different points in time become dependent, and connected by the usual rules of evolution. While the Dirac approach is perfectly consistent, leading us directly to standard QFT, it  removes completely the geometrical spacetime picture that we proposed classically. Notice also that the particular choice of foliation made by Dirac brackets is inherited by the choice made by the Legendre transform that defines the action \footnote{In \cite{diaz2023spacetime} the authors describe how SQM also allows one to keep that choice arbitrary by defining a quantum foliation, which might be interpreted as a quantum reference frame for QFTs. See also section \ref{sec:concl}}. Notably, if one quantizes according to Dirac scheme, a particular observer must be chosen before quantization and all the spacetime symmetries are hidden, just as in the usual canonical quantization of field theories.

\subsection{Quantization: The quantum action approach}\label{sec:quantumactionquantization}

The previous no-go theorem suggests that QM and relativity are fundamentally incompatible: If one can only quantize  ``genuine'' dynamical variables, and those variables have to be determined according to a particular choice of what is meant by ``time'', there is no hope to develop a quantum geometric picture of fields in spacetime where covariance is explicit. Even if we develop a classical spacetime picture first, as in section \ref{sec:SCM},  the nature of the constraints makes it impossible to work with conventional (extended) quantum states as the one proposed in section \ref{sec:limitations}. The incompatibilities we discussed in section \ref{sec:limitations} are indeed a symptom of this deeper problem.

Notably there is a completely different way of thinking about the issue that circumvents it entirely and which brings QM and relativity conceptually closer. Possibly, and perhaps surprisingly, the easiest path to solve the problem is to borrow inspiration from \emph{quantum computing}: Since standard QM operates linearly, the only general way to obtain quantities that are functions of powers of the entries of a quantum state is to consider the tensor product of copies of the state. 
The most basic example is the SWAP test that allows one to compute the purity of a state as follows ${\rm tr}[\rho^2]={\rm Tr}[\rho \otimes \rho SWAP]$ (since SWAP is unitary this mean value can be easily estimated by means of a DQC1 protocol \cite{knill1998power}). This relation is straightforwardly verified as follows
\begin{align*}
    {\rm Tr}[\rho \otimes \rho \, \text{SWAP}]&=\sum_{i_0,i_1}\langle i_0 i_1|\rho \otimes \rho \,\text{SWAP} |i_0i_1\rangle=\sum_{i_0,i_1}\langle i_0 i_1|\rho \otimes \rho  |i_1i_0\rangle\\&=\sum_{i_0,i_1}\langle i_1|\rho|i_0\rangle \langle i_0|\rho |i_1\rangle=\sum_i \langle i|\rho^2|i\rangle={\rm tr}[\rho^2]\,,
\end{align*}
where we used a complete basis $|i\rangle$ of the original Hilbert space.

How is this comment related to our discussion about SQM?
The essential point is that for bosonic fields imposing a spacetime algebra is equivalent to define fields acting on the tensor product \emph{in time} of  the traditional Hilbert space of QFT. We discussed this point in section \ref{sec:secondq} by discretizing the spacetime algebra \ref{eq:basicalg}. If two copies are considered, we can use the SWAP operator to relate observables that live in $\mathcal{H}=h\otimes h$ to observables in $h$, for $h$ the conventional Hilbert space of QFTs. For $N$ slices, the generalization is provided by the operator $e^{i\epsilon \mathcal{P}}$ satisfying
\begin{equation}
    e^{i\epsilon \mathcal{P}}|i_0i_1\dots i_{N-1}\rangle=|i_{N-1}i_0i_1\dots\rangle\,,
\end{equation}
namely $e^{i\epsilon \mathcal{P}}$ implements single step time-translations \emph{across} slices. The operator is unitary and can be written as a product of different SWAP operators. Here $\mathcal{P}$ is the generator of these translations and the factor $\epsilon$ indicates that we are translating a single time step such that e.g. $(e^{i\epsilon \mathcal{P}})^2=e^{i2\epsilon \mathcal{P}}$ translates two times and so on. It is a straightforward exercise to show that ${\rm Tr}[e^{i\epsilon \mathcal{P}}\otimes_t O^{(t)}_t]={\rm tr}[\hat{T}\prod_t O^{(t)}]$. On the left hand side we are considering tensor product of operators while on the right hand side we have a standard composition of time ordered operators. Thus the time translation operator allows us to map between quantities in $\mathcal{H}=h^{\otimes N}$ and quantities in $h$ \cite{diaz2025spacetime}.

\def\arraystretch{1.5}
 \begin{table}
     \centering
     \begin{tabular}{|c|}
     \hline
         \textbf{Quantum-action-based quantization}   \\\hline\hline
         Promote the complete spacetime PBs to spacetime commutators: $\{,\}\to \frac{1}{i\hbar} [,]$\\ \hline
         Promote all functions in the spacetime phase space to operators. This includes the action $S\to \hat{\mathcal{S}}$      \\\hline
         Quantum mechanical quantities are defined by $\langle \dots \rangle:={\rm Tr}[e^{i\hat{\mathcal{S}}/\hbar}\dots]$   \\ \hline
         The classical constraints are satisfied within $\langle \dots \rangle$ and yield standard unitary evolution\\\hline
          Equal time correlators of hermitian operators are equal to standard quantum expectation values\\\hline
     \end{tabular} 
     \caption{Steps of the quantum action quantization scheme applied to the spacetime classical mechanics formalism and basic properties.}
     \label{tab:quantumaction}
 \end{table}

Notably, the operator $\mathcal{P}$ for continuum time theories is an object that we already defined: 
\begin{equation}
    \mathcal{P}=\int d^Dx\, \pi(x) \dot{\phi}(x)\,,
\end{equation}
which is the first part of the quantum action operator we obtained via second quantization of the operator $p^0$, the generator of time translations in QT schemes. One can readily verify that 
\begin{equation}
    e^{i\tau \mathcal{P}}\phi(t,\textbf{x}) e^{-i\tau \mathcal{P}}=\phi(t+\tau,\textbf{x})\,,\;\;\;\;  e^{i\tau \mathcal{P}}\pi(t,\textbf{x}) e^{-i\tau \mathcal{P}}=\pi(t+\tau,\textbf{x})\,,
\end{equation}
showing that indeed $\mathcal{P}$ implements time translations \emph{across} slices (we recall that $t$ here does not correspond to unitary evolution but is just a label). 
Considering that for discrete time $e^{i\epsilon \mathcal{P}}$ can be used to connect mean values between copies of a Hilbert space with mean values in a single copy it is natural to consider quantities of the form ${\rm Tr}[e^{i\tau \mathcal{P}}\dots]$, where in principle we take the limit $\tau\to 0$ to mimic the discrete time case. If we in addition consider the complete quantum action, $e^{i\tau \mathcal{S}}=e^{i\tau \mathcal{P}}e^{-i\tau \int dt\, H}$ we see that the first piece generates translations across time, while the second generates time translations within slices (standard evolution) leading us to a mathematically simple but conceptually important theorem.

\begin{Theorem}\label{th:constraintseis}
Consider the ``mean value'' $\langle \dots \rangle_\tau :=\frac{{\rm Tr}[e^{i\tau \mathcal{S}}\dots ]}{{\rm Tr}[e^{i\tau \mathcal{S}}]}$.
    The constraints $[\mathcal{S},O]\approx 0$ are satisfied within $\langle \dots \rangle_\tau$.
\end{Theorem}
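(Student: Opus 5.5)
The key identity is the cyclicity of the trace combined with the fact that $e^{i\tau\mathcal{S}}$ commutes with $\mathcal{S}$. The constraint $[\mathcal{S},O]\approx 0$ must be read as a weak equation imposed inside the correlator. So the claim to prove is that, for any operator $O$ and any string of further insertions $A$, the commutator insertion vanishes in the appropriate sense:
\begin{equation}
\langle [\mathcal{S},O]\rangle_\tau=0 .
\end{equation}
More generally, $\langle [\mathcal{S},O]A\rangle_\tau$ should reduce to terms in which the constraint has been "used", i.e. $O$ is replaced by its evolved or translated version.

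The first step is the simplest case. I would write
\begin{equation}
{\rm Tr}\big[e^{i\tau\mathcal{S}}(\mathcal{S}O-O\mathcal{S})\big]={\rm Tr}\big[\mathcal{S}e^{i\tau\mathcal{S}}O\big]-{\rm Tr}\big[e^{i\tau\mathcal{S}}O\mathcal{S}\big].
\end{equation}
Moving $\mathcal{S}$ to the front by cyclicity gives zero. This works because $[e^{i\tau\mathcal{S}},\mathcal{S}]=0$. Dividing by ${\rm Tr}[e^{i\tau\mathcal{S}}]$ gives the claim for a single insertion.

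For insertions inside longer correlators, I would use the equivalent "integrated" form of the constraint. Cyclicity gives
\begin{equation}
{\rm Tr}\big[e^{i\tau\mathcal{S}}O A\big]={\rm Tr}\big[e^{i\tau\mathcal{S}}A\, e^{i\tau\mathcal{S}}Oe^{-i\tau\mathcal{S}}\big].
\end{equation}
This shows that within $\langle\dots\rangle_\tau$ the operator $O$ is identified with its image $e^{i\tau\mathcal{S}}Oe^{-i\tau\mathcal{S}}$. Differentiating in $\tau$, or expanding to first order, recovers the infinitesimal statement that $[\mathcal{S},O]$ acts trivially in the correlator.

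The second step makes contact with the classical constraints $\{\phi(x),S\}\approx0$ and $\{\pi(x),S\}\approx0$. I would use the factorization $e^{i\tau\mathcal{S}}=e^{i\tau\mathcal{P}}e^{-i\tau\int dt\,H}$ given just above the statement. I would also use the already established facts that $\mathcal{P}$ translates fields across slices and that $\int dt\,H$ generates standard evolution within slices; note that $[\mathcal{P},\int dt\,H]=0$ by time-translation invariance of $H$. Together these show that the identification $O\sim e^{i\tau\mathcal{S}}Oe^{-i\tau\mathcal{S}}$ means the following: $\phi(t+\tau,\mathbf{x})$ inside the trace equals $\phi(t,\mathbf{x})$ evolved by $e^{-iH\tau}$. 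This is precisely the finite form of Hamilton's equations $\dot\phi\approx\pi$ and $\dot\pi\approx\nabla^2\phi-\mathcal{V}'$. The infinitesimal versions are obtained from the commutators $[\mathcal{S},\phi(x)]$ and $[\mathcal{S},\pi(x)]$, which reproduce the quantum analogues of the constraints in Eq.~\eqref{eq:hamclasconstraints}.

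The main obstacle is rigor, not algebra. In the continuum, $\mathcal{S}$ is unbounded, ${\rm Tr}[e^{i\tau\mathcal{S}}]$ and the cyclicity manipulations require trace-class control, and the $\tau\to0$ limit is delicate. I would handle this by working in the discretized setting of Appendix~\ref{app:discretequantum}, with $N$ slices on a finite lattice and a bounded cutoff on the bosonic modes. There all operators are finite matrices and cyclicity is exact, so the statement holds verbatim for each regularization. The continuum constraints would then be stated as the limit of the discrete identities, and I would flag that limit as the subtle point rather than prove it in full.
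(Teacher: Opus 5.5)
Your first step is exactly the paper's proof: cyclicity of the trace together with $[e^{i\tau\mathcal{S}},\mathcal{S}]=0$ gives ${\rm Tr}[e^{i\tau\mathcal{S}}[\mathcal{S},O]]=0$, which is all the theorem asserts, and the rest of your proposal is extra interpretation and regularization commentary. One caveat on that extra material: with further insertions the correct consequence is not that $[\mathcal{S},O]$ acts trivially, but the identity $\langle[\mathcal{S},O]A\rangle_\tau=-\langle O[\mathcal{S},A]\rangle_\tau$ (the theorem applied to the product $OA$), so contact terms survive.
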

\begin{proof}
The proof is direct:
    \begin{equation}
    {\rm Tr}[e^{i\tau\mathcal{S}}[\mathcal{S},O]]={\rm Tr}[e^{i\tau\mathcal{S}}\mathcal{S}O]-{\rm Tr}[e^{i\tau\mathcal{S}}O\mathcal{S}]={\rm Tr}[e^{i\tau\mathcal{S}}\mathcal{S}O]-{\rm Tr}[Se^{i\tau\mathcal{S}}O]={\rm Tr}[e^{i\tau\mathcal{S}}\mathcal{S}O]-{\rm Tr}[e^{i\tau\mathcal{S}}\mathcal{S}O]=0
\end{equation}
where we used the cyclicity of the trace and that $[e^{i\tau\mathcal{S}},\mathcal{S}]=0$.
\end{proof}
 Then, as long as we consider only quantities within the trace, the constraints are automatically satisfied. In this sense, the ``quantum action approach'' we are introducing circumvents the problem with second class constraints entirely. One can verify that while the classical constraints correspond to Hamilton equations, their quantum counterpart is instead related to Heisenberg equations (this is more easily proven for discrete time; see Appendix \ref{app:discretequantum}). 
 Moreover, since this holds for any value of $\hbar$, and we might evaluate the trace in the eigenbasis of fields, we can make a path integral like argument to recover the classical PB constraints from the $\hbar$ limit. This shows a consistency between this quantization scheme and SCM. These arguments are described in more detail in the Appendix \ref{app:discretequantum} where we also consider the discrete time version of constraints and initial/final conditions. The general quantization picture we are introducing is summarized in Table \ref{tab:quantumaction}.

Let us now show that quantities between brackets $\langle \dots \rangle_\tau$ are indeed related to standard QFT observables. Let us first recall that quadratic operators are fully determined by their basic contractions (Wick's theorem). For a diagonal quadratic operator, the only non-trivial correlator is
\begin{equation}\label{eq:contr}
    \langle a^\dag_k a_l\rangle:=\frac{{\rm Tr}\big[\exp(-\sum_i \lambda_i a^\dag_i a_i)a^\dag_k a_l\big]}{{\rm Tr}\big[\exp(-\sum_i \lambda_i a^\dag_i a_i)\big]}=\frac{1}{\exp(\lambda_k)-1}\delta_{kl}\,.
\end{equation}
Here the indices $k,l$ are ``space-like separated'', in the sense that the operators $a^\dag_k$, $a_l$ are not evolved in the given reference frame and correspond to orthogonal modes. Let us now apply this basic result to the free Klein-Gordon action of Eq.\ \eqref{eq:freekgaction}:
    \begin{align}\label{eq:momcorr}
  \langle a^\dag(p)a(k)\rangle_\tau&=\frac{{\rm Tr}\big[\exp\big\{i\tau \int \frac{d^Dq}{(2\pi)^D}\, (q^0-E_\textbf{q}+i\epsilon)a^\dag(q)a(q)\}a^\dag(p)a(k)\big]}{{\rm Tr}\big[\exp \big\{i\tau \int \frac{d^Dq}{(2\pi)^D}\, (q^0-E_\textbf{q}+i\epsilon)a^\dag(q)a(q)\big \}\big]}
  \\&=\frac{1}{\exp \{-i\tau (p^0-E_\textbf{p}+i\epsilon)\}-1}(2\pi)^D\delta^{(D)}(p-k)\,,
\end{align}
where added a small imaginary part to the part of the action corresponding to standard time evolution (in section \ref{sec:ststates} and in the last part of Appendix \ref{app:discretequantum} we clarify how this is related to the action effectively projecting onto the vacuum state).  We can use this basic expression to show by direct evaluation that \footnote{A discussion on the $\tau$ regularization we are using can be found in \cite{diaz2021path}. Therein  it is also shown that $\tau$ can be related to the quotient between the Hilbert space and Feynman measures. We also recall that the small $\tau$ limit is replacing the role of $\epsilon$ in discrete time settings. For discrete time the previous relations hold for finite $\epsilon$ and insertion of operators at times commensurable with $\epsilon$, see Appendix \ref{app:discretequantum}. }
\begin{equation}\label{eq:aacorrel}
 \lim_{\tau\to 0}  \langle \sqrt{\tau} a(t,\textbf{p}) \sqrt{\tau} a^\dag(E_{\textbf{k}},\textbf{k})\rangle_\tau=e^{-iE_{\textbf{k}}t}(2\pi)^d \delta^{(d)}(\textbf{p}-\textbf{k})=\langle 0|a_H(t,\textbf{p})a^\dag(\textbf{k})|0\rangle\,,
\end{equation}
and 
\begin{equation}
 \lim_{\tau\to 0}  \langle  \sqrt{\tau} a(E_{\textbf{k}},\textbf{k}) \sqrt{\tau} a^\dag(t,\textbf{p})\rangle_\tau=e^{iE_{\textbf{k}}t}(2\pi)^d \delta^{(d)}(\textbf{p}-\textbf{k})=\langle 0|a(\textbf{k})a^\dag_H(t,\textbf{p})|0\rangle\,.
\end{equation}
we see that now both $a(t,\textbf{p})$ and $a^\dag(t,\textbf{p})$ behave like proper physical operators within brackets, with $|0\rangle$ the usual ground state of the free Klein-Gordon Hamiltonian $H$ and $a_H(t,\textbf{p}):=e^{iHt}a(\textbf{p})e^{-iHt}$ a conventional ladder operator in the Heisenberg picture. Moreover, the previous problematic internal contractions get replaced by 
\begin{equation}
 \lim_{\tau\to 0}  \langle \sqrt{\tau} a(t,\textbf{p}) \sqrt{\tau} a^\dag(t',\textbf{k})\rangle_\tau=e^{-iE_\textbf{p}(t-t')}\theta(t-t')(2\pi)^d \delta^{(d)}(\textbf{p}-\textbf{k})=\langle 0|\hat{T}a_H(t,\textbf{p})a^\dag_H(t',\textbf{p})|0\rangle\,,
\end{equation}
where the heaviside step function follows from $e^{-iE_\textbf{k}(t-t')}\theta(t-t')=\int \frac{dk^0}{k^0-E_{\textbf{k}}+i\epsilon}e^{-ik^0(t-t')}$.

As a consequence of the previous discussion we can safely conclude that we have identified the right structure to generalize the PaW-like conditioning to many particles. However, this generalization is indeed non-trivial: We are  no longer making use of physical states in the traditional sense. Instead, we are considering mean values with respect to the exponential of the action. In other words, it would seem that by abandoning the Dirac quantization scheme we are also abandoning the standard notion of quantum state.

\subsection{Interacting field theories within the quantum action approach}\label{sec:interactions}

Before discussing the implications of our scheme regarding quantum states, let us briefly discuss how the formalism provides a very natural way to introduce standard observables in QFT, arising both for free theories and in perturbative approaches to interacting theories.

While in the previous section we focused on ladder operators, one can easily obtain an expression for the mean value of different field insertions: \begin{align}\label{eq:feynproptau}
   \langle\phi(x)\phi(y)\rangle_\tau= \frac{1}{\tau}\int \frac{d^Dp}{(2\pi)^D}\frac{i}{p^2-m^2+i\epsilon}e^{-ip (x-y)}+\mathcal{O}(\tau)
\end{align}
which follows from Eq.\ \eqref{eq:momcorr} and $\frac{i}{p^0-E_\textbf{p}+i\epsilon}-\frac{i}{p^0+E_\textbf{p}-i\epsilon}\equiv 2E_\textbf{p}\frac{i}{{p^2-m^2+i\epsilon}}$. One immediately recognizes the momentum representation of Feynman propagator \cite{peskin2018introduction} which allows us to write
\begin{equation}\label{eq:feynprop}
 \lim_{\tau\to 0}  \langle\sqrt{\tau}\phi(x)\sqrt{\tau}\phi(y)\rangle_\tau= \langle 0| \hat{T} \phi_H(x)\phi_H(y)|0\rangle = \;\; \includegraphics[height=5.5pt]{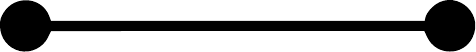}
\,.
\end{equation}
On the right-hand side, $\phi_H(\textbf{x},t):=e^{iHt}\phi(\textbf{x})e^{-iHt}$ is a conventional field operator in the Heisenberg picture and $|0\rangle$ is the usual ground state of the free Klein Gordon Hamiltonian $H$.  On the left-hand side, the operators are not evolved, instead, their ``position in time'' determines the amount of evolution that is recovered upon evaluation. We have included a diagrammatic representation of the propagator as a solid line (since we are considering a real scalar field it is not necessary to include arrows).

Since higher momenta can be computed via Wick's theorem we have a clear prescription to fully recover all time-ordered correlation functions: We simply need to consider the expectation value with respect to the exponential of the action of operators inserted at different spacetime points. Notably, we have recovered all of the main ingredients employed in standard QFT to extract physical predictions, such as scattering amplitudes and decay rates \cite{peskin2018introduction}.

As a matter of fact, the previous considerations also provide us  the proper treatment of interacting theories: We just needs to consider the previous quantum action with the addition of interacting terms, and fields that are rescaled properly. For example, a $\lambda\phi^4$ theory is given by
\begin{equation}
    \mathcal{S}_\tau=\tau\int d^Dz \left[\pi(z)\dot{\phi}(z)-\frac{\pi^2(z)}{2}-\frac{1}{2}(\nabla \phi(z))^2-\tau\frac{\lambda}{4!}\phi^4(\tau)\right]\,.
\end{equation}
Then, the \emph{interacting} correlation functions are obtained from $\langle \dots \rangle^{(\text{int})}_{\tau}\equiv \frac{{\rm Tr}[e^{i\mathcal{S}_\tau}\dots]}{{\rm Tr}[e^{i\mathcal{S}_\tau}]}$ with the index ``int'' indicating that we are considering the full interacting action. Since we are only interested in the small $\tau$ limit we can separate the free and interacting parts (Trotter approximation), and then expand the latter perturbatively to write
\begin{equation}\label{eq:perturb}
    \langle \dots \rangle^{(\text{int})}_{\tau}\simeq \langle \left(1-i\tau^2 \frac{\lambda}{4!}\int d^Dz\, \phi^4(z)-\frac{(\tau^2 \lambda)^2}{4!^2} \int d^Dz d^Dw\, \phi^4(z) \phi^4(w)+\dots \right) \rangle_{\tau}
\end{equation}
where only connected diagrams are to be considered (the denominator should be expanded as well and the usual ``vacuum bubble'' cancellation follows). Within $\langle \dots \rangle_\tau$ we may use the basic two-point contractions to compute any mean value, as it follows from Wick's theorem applied to the gaussian operator $e^{i\tau\mathcal{S}_{\text{free}}}$ \footnote{We remark that in the present formalism one does not need to derive a Wick's theorem of time-ordered contractions. We only make use of the basic ``algebraic'' Wick's theorem applied to free/gaussian quantum actions. The time-dependent Wick's theorem emerges as a consequence of the properties of the action.}. To give a few basic examples, when evaluating $\langle \phi(x_1)\phi(x_2)\phi(x_3)\phi(x_4)\rangle^{(\text{int})}_{\tau}$ the first order term in Eq.\ \eqref{eq:perturb}
leads to contractions of the form
\begin{equation}
\langle  \wick{\c1\phi(x_1)\c2\phi(x_2)\c3\phi(x_3)\c4\phi(x_4)\c1\phi(z)\c2\phi(z)\c3\phi(z)\c4\phi(z)}\rangle_\tau\;=\; \raisebox{-0.4\height}{\includegraphics[height=35pt]{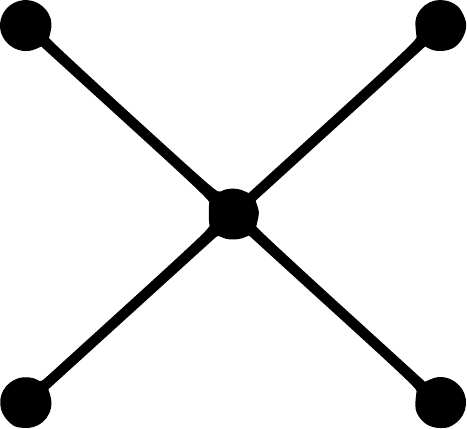}}
\,,
\end{equation}
while the second order term leads e.g. to 
\begin{equation}
\langle  \wick{\c1\phi(x_1)\c2\phi(x_2)\c3\phi(x_3)\c4\phi(x_4)\c1\phi(z)\c3\phi(z)\c5\phi(z)\c6\phi(z)\c5\phi(w)\c6\phi(w)\c2\phi(w)\c4\phi(w)}\rangle_\tau \;=\; \raisebox{-0.4\height}{\includegraphics[height=35pt]{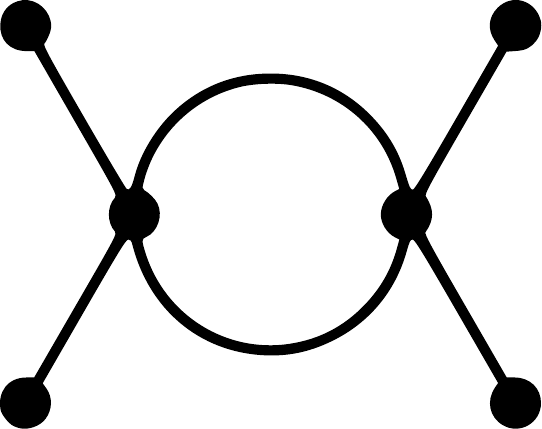}}
\end{equation}
\noindent and to other permutations on external legs of this one-loop diagram (corresponding to the three different scattering channels). One can verify that for each propagator, a factor $\tau$ is implicitly involved in these contractions so that the proper small $\tau$ limit gives the corresponding standard QFT result with the Feynman diagrams representing the same quantities. This also indicates that for small $\tau$ the divergences appearing in loop diagrams in the current formalism match precisely the divergences in the standard formulation. In addition, the formalism allows one to regulate interacting theories from the onset. This in particular suggest novel ways to explicitly control the Lorentz symmetry breaking induced by regularization, with the symmetry of space and time preserved in the process (see details in Appendix \ref{app:continuumtimelimit}).

In summary, in order to obtain correlation functions of interacting theories one just need to consider expectation values of field insertions with respect to interacting quantum actions. We recall that correlation functions are one of the main ingredients of QFTs as they, in particular, allow one to compute S-matrix elements by means of the Lehmann–Symanzik–Zimmermann (LSZ) reduction formula  \cite{lehmann1955formulierung}. 
Similar considerations hold for fermions and boson-fermion interactions (see Appendix \ref{app:fermions} and the recent results in \cite{diaz2025spacetime}).

\subsection{Quantum actions and the concept of quantum state in spacetime}\label{sec:ststates}

Let us now discuss  one of the main consequences of abandoning Dirac quantization and replacing it with our proposal. As we have seen, we cannot impose constraints on standard states to define a physical subspace. Instead, we introduced ``expectation values'' of local operators computed with respect 
to the exponential of the action. Notably, this somewhat unusual proposal can be related to the ongoing discussion on the possibility of defining quantum states over time \cite{horsman2017can, fitzsimons2015quantum, cotler2018superdensity, fullwood2024operator, diaz2025spacetime, milekhin2025observable, guo2025spacetime}, a discussion that so far has been mostly orthogonal to the QT scenario. In fact, while QT schemes deal with a quantum time operator, the states over time proposals deal with a tensor product structure across time. 
As we have shown in section \ref{sec:secondq} these two notions are linked via second quantization of QT schemes.

In order to briefly compare with the literature, let us introduce the subject for two time-slices ($N=2$) and arbitrary quantum systems so that $\mathcal{H}=h\otimes h\equiv \otimes_{t=0}^1h$ for $h$ the standard Hilbert space of the system under consideration. As we discussed in section \ref{sec:quantumactionquantization} for $N=2$ we have $e^{i\mathcal{S}}= \text{SWAP}\,U\otimes U$ with $U=e^{-iH t}$ (here we take $t=T/N$). If we also introduce an initial state we can define $\mathcal{R}:=(|\psi\rangle \langle \psi|(U^\dag)^2 \otimes \mathbbm{1}) \text{SWAP}\,U\otimes U=(|\psi\rangle\langle \psi| \otimes \mathbbm{1})\,\text{SWAP}\,U\otimes U^\dag$ where $(U^\dag)^2=e^{i H T}$. With these definitions it is easy to show that
\begin{equation}\label{eq:partialtr}
    {\rm Tr}_1[{\mathcal{R}}]=|\psi\rangle \langle \psi|\,, \quad     {\rm Tr}_0[{\mathcal{R}}]=|\psi(t)\rangle \langle \psi(t)|\,.
\end{equation}
We see that the partial trace of $\mathcal{R}$ over a time slice yields the quantum state evolved to the complement time. 
These are precisely the requirements proposed in \cite{horsman2017can} for a quantum state over time, here recovered from the exponential of the quantum action multiplied on the left by the initial state. In \cite{horsman2017can} the hermiticity condition was also imposed. As a matter of fact, among other proposals, the authors considered as a candidate for state over time the operator
$(\mathcal{R}+\mathcal{R}^\dag)/2$. It turns out that relaxing the hermiticity condition is quite insightful. To show this let us first notice that Eq.\ \eqref{eq:partialtr} does not fully determine $\mathcal{R}$. Instead, one can uniquely define $\mathcal{R}$ by imposing all spacetime correlators ${\rm Tr}[\mathcal{R}A\otimes B]$. With our definition we have ${\rm Tr}[\mathcal{R}A\otimes B]=\langle \psi|B(t)A|\psi\rangle$ while ${\rm Tr}[\mathcal{R}^\dag A\otimes B]=\langle \psi|AB(t)|\psi\rangle$. We see that $\mathcal{R}$ corresponds to time-ordered evolution, $\mathcal{R}^\dag$ corresponds to anti-time ordered ones. This is particularly useful as it implies ${\rm Tr}[(\mathcal{R}-\mathcal{R}^\dag)A\otimes B]=\langle \psi|[B(t),A]|\psi\rangle$ showing that \emph{non-hermiticity is a signature of causality}. This is in striking contrast with standard quantum states defined at a given time but separated in space, for which necessarily $[A,B]=0$ (for all local operators on different subsystems) implying $\rho-\rho^\dag=0$. 
Moreover, if we now define $\mathcal{R}$ for arbitrary many time and space slices (see  \cite{diaz2025spacetime} for details):  \begin{equation}\label{eq:Rdiscrete}
    \mathcal{R}:=(|\psi\rangle \langle \psi|e^{-iHT}\otimes_{t=1}^{N-1} \mathbbm{1})\,e^{i\epsilon \mathcal{P}}\otimes_{t=0}^{N-1} e^{-i\epsilon H}\equiv |\psi\rangle_0 \langle \psi, -T| \;e^{i\mathcal{S}}\,,
\end{equation} with $|\psi\rangle$ and $H$ acting on $\otimes_x h_x$,  ${\rm Tr}_{\text{all-spacetime}}[(\mathcal{R}-\mathcal{R}^\dag) A_x B_y]={\rm Tr}_{xy}[(\mathcal{R}_{xy}-\mathcal{R}^\dag_{xy}) A_x B_y]$  where the sub-indices $x,y$ in the operators indicate the Hilbert space in which they act while $\mathcal{R}_{xy}:={\rm Tr}_{\overline{xy}}[\mathcal{R}]$ where $\overline{xy}$ denotes the complement of $x \cup y$. Whether $\mathcal{R}_{xy}$ is a standard quantum state (or how close is to one) will be determined by the causal relation among $x,y$ for a given theory. In particular, equal-time partial traces lead to \emph{standard quantum states for all theories}, at least for those defined by standard QM where the evolution sector of the action is given by a tensor product of evolutions (as in \eqref{eq:Rdiscrete}). Below we discuss how these comments relate to the concept of \emph{microcausality} in QFT; see also figure \ref{fig:lightcone}.
Because of the previous considerations it is natural to think of $\mathcal{R}$ as a proper generalization of the notion of state to the spacetime setting. 
Notice also that $\mathcal{R}$ is the \emph{unique} object that contains both timelike and spacelike correlators as expectation values of local operators in $\mathcal{H}$. At the same time, $\mathcal{R}$ is directly related to the exponential of the quantum action suggesting a direct interpretation: The SQM formalism provides a Hilbert space embedding of the ``sum over histories'' of Feynman (see also \cite{diaz2021path}).

Let us also mention some interesting entropic properties of spacetime states that one can easily introduce for discrete time: One can easily show that $\mathcal{R}^N=\otimes_t |\psi(t)\rangle \langle \psi(t)|$ which is a projector for pure states. This leads to $\mathcal{R}$ having a single eigenvalue $1$ and other eigenvalues equal to zero, implying  ${\rm Tr}[\mathcal{R}^k]=1$ for all $k$, so that all pseudo-entropies \cite{harper2025non} of $\mathcal{R}$ vanish as long as the initial state is pure and the evolution is unitary. While it might seem unusual to define entropies for non-hermitian operators, the concept of pseudo-entropy has indeed been related to timelike entanglement in the context of the dS-CFT correspondence \cite{harper2023timelike}, and in particular, for $N=2$ it was recently shown \cite{milekhin2025observable} that a version of $\mathcal{R}$ (see \cite{diaz2025spacetime} for a direct comparison) provides a microscopic definition of such timelike entanglement in CFTs, later expanded in \cite{guo2025spacetime}.  For a recent discussion about non-hermitian generalizations of QM we refer to \cite{harper2025non}.
Moreover, for thermal correlators it was also shown that spacetime states $\mathcal{R}$ satisfy a variational principle generalizing the standard classical principle of minimal action by the addition of a Von-Neumann pseudo-entropy \cite{diaz2025spacetime}.

\begin{figure}[t!]
    \centering
    \includegraphics[width=0.8\linewidth]{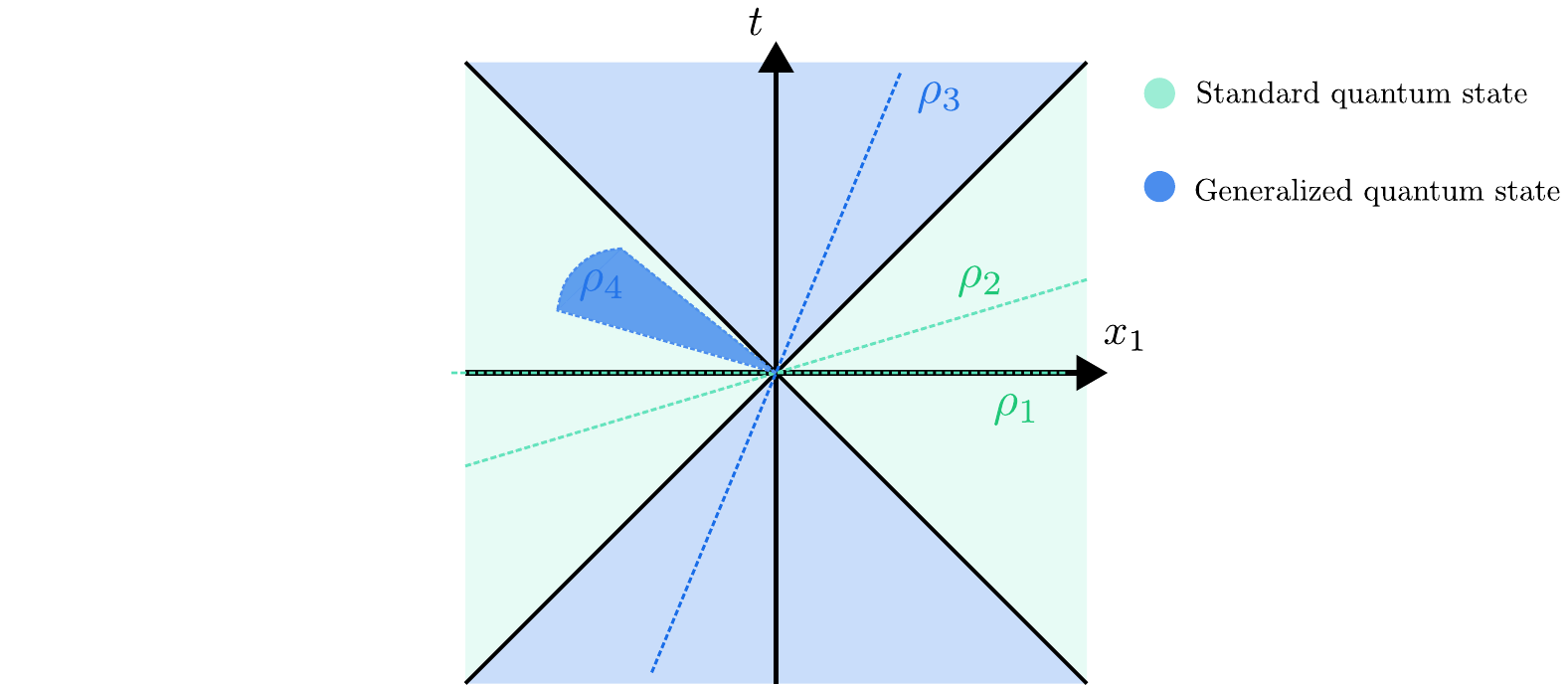}
    \caption{Different causality induced behavior of partial traces of a spacetime state $\mathcal{R}$ of a relativistic theory. For the observer  of the Minkowski diagram canonical quantization is imposed at fixed time and leads to a quantum state $\rho_1$. Another observer quantizing on a different ``tilted'' surface also defines its corresponding notion of state $\rho_2$. While these two states are conventionally associated with different (isomorphic) Hilbert spaces, from the spacetime perspective both $\rho_1$ and $\rho_2$ can be recovered from partial traces of a single object $\mathcal{R}$. 
    One can also consider an operator $\rho_3$ defined on a timelike surface also determined by $\mathcal{R}$. This operation does not yield a standard quantum state as causality requires $\rho_3^\dag \neq \rho_3$. Even more generally, if one considers a region which is not entirely spacelike (for some pair of points the interval is timelike) the corresponding object $\rho_4$ is also not a standard quantum state. }
    \label{fig:lightcone}
\end{figure}

Let us now return to the continuum formalism we developed throughout the manuscript and reinterpret it in light of the spacetime state discussion.  In order to understand why in the continuum case we don't need to introduce an initial state explicitly we notice first that ${\rm Tr}[(e^{i\epsilon \mathcal{P}}\otimes_t e^{-i\epsilon H})\otimes_t O^{(t)}_t]={\rm tr}[e^{iHT}\hat{T} \prod_t O_H^{(t)}(t)]$. The key point is that on the right hand side an operator $e^{iHT}$ appears, showing that the expectation value of the exponential of the action corresponds to propagators (with e.g. ${\rm Tr}[|\psi\rangle_0\langle \varphi| (e^{i\epsilon \mathcal{P}}\otimes_t e^{-i\epsilon H})]=\langle \varphi| e^{iHT}|\psi\rangle$). In the limit of large $T$ with a small imaginary part, the operator $e^{iHT}$ becomes a projector onto the vacuum, a standard assumption that leads to Feynman prescription \cite{weinberg2005quantum,peskin2018introduction} and which let us omit the term  $|\psi\rangle_0\langle \varphi|$ altogether if we are interested in vacuum states. This means that 
\begin{equation}
    \mathcal{R}=\frac{e^{i\mathcal{S}_\tau}}{{\rm Tr}[e^{i\mathcal{S}_\tau}]}\,,
\end{equation}
is already \emph{the spacetime state corresponding to the vacuum}, in agreement with ${\rm Tr}[\mathcal{R}O(t)]\equiv \langle 0| O_H(t)|0\rangle$ (here ``$\equiv$'' denotes the proper $\tau$ limit explained in the previous sections) and provided we follow Feynman's prescription as in Eqs.\ (\ref{eq:momcorr}, \ref{eq:feynproptau}). 
Let us now notice that $\mathcal{R}^\dag$ leads to an anti time order prescription as ${\rm Tr}[\mathcal{R}^\dag \phi(x) \phi(y)]={\rm Tr}[\mathcal{R}\, \phi(x) \phi(y)]^\ast$ while $\langle 0|\phi_H(x)\phi_H(y)|0\rangle^\ast=\langle 0|\phi_H(y) \phi_H(x)|0\rangle$ (this also follows by noting that the dagger interchanges $i\epsilon \to -i\epsilon$). As a consequence,  \begin{equation}
    {\rm Tr}[(\mathcal{R}-\mathcal{R}^\dag)\phi(x)\phi(y)]\equiv\langle 0| [\phi_H(x),\phi_H(y)] |0\rangle= [\phi_H(x),\phi_H(y)]\,,
\end{equation} where we assumed $x^0\geq y^0$ and that the commutator is a $c$-number (as in free theories). 
In standard QFT this \emph{unequal} time commutator determines if the field at $x$ and the field at $y$ are causally connected, and, in particular, for relativistic QFTs one finds a vanishing commutator iff $|x-y|$ is spacelike. This is the standard statement of microcausality, and provides the basis to prove that QFT, in its standard non-manifestly covariant form, is compatible with relativity.  In the spacetime approach this quantity appears under a new but related light:  
The causal connection between fields at two spacetime points determine whether the corresponding $\mathcal{R}$, reduced to a region containing both $x$ and $y$, can be a standard quantum state, as depicted in figure \ref{fig:lightcone}. Since $\mathcal{R}$ encompasses all spacetime, it is clear that this basic object from which standard quantum states may be derived cannot be itself a standard quantum state.

The previous $\mathcal{R}$ is the spacetime version of the vacuum state. In order to define spacetime states corresponding to excited states we need to consider instead
$\mathcal{R}=\frac{a(T,\textbf{k}_1)a(T,\textbf{k}_2)\dots e^{i\mathcal{S}_\tau} a^\dag(-T,\textbf{p}_1)a^\dag(-T,\textbf{p}_2)\dots}{{\rm Tr}[a(T,\textbf{k}_1)a(T,\textbf{k}_2)\dots e^{i\mathcal{S}_\tau} a^\dag(-T,\textbf{p}_1)a^\dag(-T,\textbf{p}_2)\dots]}$, namely we insert annihilation operators on the l.h.s. and creation operators on the r.h.s. of the exponential of the action for large $T$. This mimics the insertion $|\psi\rangle_0\langle \varphi|$ we discussed in the discrete case. As a matter of fact, ${\rm Tr}[\mathcal{R}O(t)]\equiv  \langle \varphi| O_H(t)|\psi\rangle$ for 
$|\psi\rangle=a^\dag(\textbf{p}_1)a^\dag(\textbf{p}_2)\dots|0\rangle$, $|\varphi\rangle=a^\dag(\textbf{k}_1)a^\dag(\textbf{k}_2)\dots|0\rangle$  (we are assuming as usual that for large $T$ the evolution of ladder operators is essentially free). Notice that the local operator $O(t)$ here is arbitrary, and in particular no normal order is needed. This confirms that we have circumvented the problems identified in section \ref{sec:limitations} which arise from a naive second quantization approach based on standard quantum states.

Moreover, when all the momenta $\textbf{p}$ and $\textbf{k}$ differ, we can make use of standard limits to replace the last expression of $\mathcal{R}$ with 
\begin{equation}
\mathcal{R}'\propto a(E_{\textbf{k}_1},\textbf{k}_1)a(E_{\textbf{k}_2},\textbf{k}_2)\dots a(E_{\textbf{k}_{n'}},\textbf{k}_{n'}) e^{i\mathcal{S}_\tau} a^\dag(E_{\textbf{p}_1},\textbf{p}_1)a^\dag(E_{\textbf{p}_2},\textbf{p}_2)\dots a^\dag(E_{\textbf{p}_n},\textbf{p}_n)\,.
\end{equation} Now all the ladder operators are  \emph{on-shell}. Here we are omitting an overall constant discussed in the Appendix \ref{app:Smatrix}. This unnormalized $\mathcal{R}'$ leads to
\begin{equation}
    {\rm Tr}[\mathcal{R}']= \langle \textbf{k}_1 \textbf{k}_2\dots \textbf{k}_{n'}|S|\textbf{p}_1\textbf{p}_2\dots \textbf{p}_{n}\rangle\,,
\end{equation}
 i.e. we have found an expression for the $S$-matrix elements. Hence, $\mathcal{R}'$ is the \emph{spacetime transition-state corresponding to non-trivial scattering amplitudes}. We provide a proof of this expression in Appendix \ref{app:Smatrix} where we also relate the proportionality constant to the LSZ reduction formula and explicitly compute a simple scattering amplitude by employing the formalism. Notice that while the full idea of physical state inspired by QT does not hold, the notion of physical particle  associated with on-shell ladder operator does indeed defines the physical particles in scattering processes.\\

To conclude the multifaceted discussion of this section, let us briefly summarize its conceptual message. We have recalled the growing interest in the literature in defining a spacetime generalization of the concept of state. In particular, this line of work has proved fruitful in the context of dS/CFT, where the spacetime states $\mathcal{R}$ can be used to provide a microscopic definition of timelike entanglement \cite{milekhin2025observable}. The relation between the SQM framework and \cite{milekhin2025observable}, together with other recent approaches, was  discussed explicitly in \cite{diaz2025spacetime}, to which we refer the reader for further details. In the present manuscript, we extend the discussion of \cite{diaz2025spacetime}, as well as the QFT version developed in \cite{diaz2023spacetime}, in order to interpret quantum-action quantization in terms of these spacetime states.
Two main results follow from this viewpoint. First, we connect the no-go theorem of section \ref{sec:DiracquantizationSCM} to the broader discussion on the need for a nontrivial spacetime generalization of the concept of state. Second, in QFT we show that the corresponding $\mathcal{R}$ for scattering processes can be written in a particularly simple form in terms of on-shell ladder operators, which arise naturally from the second-quantized QT notion of particle.

\section{Conclusions}\label{sec:concl}

In this work we asked whether the lessons of QT schemes can be lifted from single-particle relativistic models to special-relativistic QFT in a way that makes Lorentz covariance manifest at the Hilbert-space level. We showed that, while promising, a naive many-body generalization runs into a sharp obstruction tied to the notion of quantum state, and that, within the natural SCM-based line of reasoning developed here, resolving it requires abandoning Dirac-style constraint quantization in favor of a quantum-action-based quantization.

Although our analysis is carried out in the ``simple'' arena of special-relativistic QFT, it already exposes an overlooked core tension between relativity and QM: If one insists on keeping the principle of relativity explicit in the quantum formalism, then the standard notion of quantum state must be generalized. This conclusion aligns with recent results in the quantum information and quantum foundations literature, and it strongly suggests revisiting several aspects of the problem of time in this new light. In fact, since local Lorentz transformations are a special case of diffeomorphisms, the same structural issue should reappear in any generally covariant setting.

Regarding explicit covariance, let us add that we have only discussed one particular problem of it in this manuscript. Namely, we have considered field algebras that treat spacetime as a whole. This allows one to define field transformations under boosts in the same way as rotations, namely through explicit geometric unitary representations. On the other hand, the quantum action operators contain a Legendre transform part that corresponds to one particular foliation of spacetime. In particular, this means that the action does not commute with boost transformations. This is resolved by treating the foliation choice as dynamical which results in a sensible  notion of quantum reference frame \cite{giacomini2019quantum} in the QFT setting. 
This is discussed in \cite{diaz2023spacetime} showing that more than a problem of SQM it is a feature, opening interesting avenues that follow from endowing the choice of observer with a quantum aspect.  We also notice that both the spacetime symmetric field algebra discussed in this manuscript, and the foliation-related algebra of \cite{diaz2023spacetime} are particularly suited to deal with  covariance under general diffeomorphisms, thus going beyond special relativistic theories. Another aspect that remains to be explored is the interplay between the SQM formalism, a dynamical foliation, and gauge theories.

Going back to the role of QT schemes in this scenario, it could be  useful to view QT states as reduced objects of the full spacetime theory: Much like reduced density matrices provide a compressed description of many-body physics in quantum chemistry \cite{coleman1963structure}, in particular relating second quantization with single particle descriptions, QT states and related objects could be regarded as marginals of spacetime states, and thus as practical tools to simplify and tackle SQM problems. Developing the ``reduced density matrix formalism'' for spacetime states is an interesting direction for future work. From this viewpoint, QT schemes and many recent results from the QT literature \cite{boette2018history, paiva2022non,roncallo2023does,cafasso2024quantum, favalli2025relativistic, coppo2026quantum} could be particularly valuable as statements about reduced spacetime information.

Another key point to emphasize is that the SQM approach is not restricted to infinite-dimensional systems or QFT. While in this manuscript we have re-derived it from many-particle considerations, SQM is a general framework that can be applied to finite-dimensional systems as well \cite{diaz2025spacetime}. This provides a clean setting to explore the emerging notion of spacetime state and its consequences, precisely in regimes where connections with quantum information are most natural. These considerations also link QT with different lines of research in the literature regarding time in QM such as \cite{horsman2017can, fitzsimons2015quantum, cotler2018superdensity, fullwood2024operator,  milekhin2025observable, guo2025spacetime}. 
In addition, just as QT schemes have already motivated  quantum-computing protocols, such as parallel-in-time quantum computing \cite{diaz2023parallel}, one can expect that the SQM setting could provide novel computational schemes both classical, e.g. tensor network based,  and quantum.

Let us add a few comments on different approaches to QFT. Although the problem has been formulated here in a Hilbert space setting, and some connections with the PI approach were also discussed, one may naturally wonder how the present considerations relate to algebraic QFT (AQFT) \cite{haag1964algebraic}. The main point is that an algebraic counterpart of SQM would not simply reduce to standard AQFT, but would differ from it in a fundamental way. In particular, it would treat $\phi$ and $\pi$ as independent spacetime generators, leading already at the kinematical level to a notion of locality different from the standard one. We discuss this comparison in more detail in Appendix \ref{app:AQFT}.

Let us finally connect our results with the recent discussion of timelike entanglement and emergent time, in particular in the context of time as an emergent quantity under the conjectured dS/CFT correspondence \cite{harper2023timelike}. In \cite{milekhin2025observable, guo2025spacetime} the authors show that a microscopic definition of timelike entanglement, previously motivated by a ``space-to-time mapping'' \cite{harper2023timelike},  requires a genuine spacetime generalization of the notion of quantum state. In \cite{diaz2025spacetime} it is shown explicitly that this generalized notion is naturally captured by the SQM spacetime states discussed in the present manuscript.
Crucially, it is precisely the entropic properties of spacetime states, still under active development, that provide the basis for a rigorous definition of timelike entanglement beyond a simple ``space-to-time mapping''. 
This shows that the SQM formulation of QFT is not merely a convenient rewriting aimed at preserving symmetries explicitly. It is a genuinely new perspective, with concrete consequences and with the potential to sharpen our understanding of the origin of spacetime.

%%%%%%%%%%%%%%%%%%%%%%%%%%%%%%%%%%%%%%%%%%

\clearpage

\acknowledgments{I thank P. Braccia and M. Cerezo for helpful discussions and for feedback that improved the presentation of this manuscript. I also thank the discussions with Lorenzo Maccone and  R. Rossignoli on the importance of understanding the role of physical subspaces in QFT settings.
This work builds on results developed in my PhD thesis \cite{diaz2024mecanica} for which I acknowledge previous support from UNLP, IFLP, and CONICET, as well as guidance from R. Rossignoli and J. M. Matera.
This work was supported by the Center for Nonlinear Studies at Los Alamos National Laboratory (LANL) as well as by the Laboratory Directed Research and Development (LDRD) program of LANL under project number 20260043DR. }

\appendix

\section[\appendixname~\thesection]{Discrete time formalism: constraints with initial/final conditions}
We discuss here the discrete-time approach to the spacetime formalism. The comments made here should also clarify the role of the time scale $\tau$ and how initial and final conditions enter the formalism. Moreover, in order to simplify the notation and to show that the formalism is valid beyond the QFT case, we will mostly focus on the case of a single particle. 

\subsection[\appendixname~\thesubsection]{Classical case}
Let us begin by discussing the discrete spacetime version of the spacetime PBs \eqref{eq:stpbs} in $3+1$ dimensions. We have
$\{\phi_{t,x,y,z},\pi_{t',x',y',z'}\}=\delta_{tt'}\delta_{xx'}\delta_{yy'}\delta_{zz'}$  with the Dirac deltas replaced by (dimensionless) Kronecker deltas. To further ease the notation, let us replace the fields by a single scalar particle, so that $\phi_{t,x,y,z}\to q_t$, $\pi_{t,x,y,z}\to p_t$. We impose then
\begin{equation}
    \{q_t,p_{t'}\}=\delta_{tt'}\,,
\end{equation}
which replaces the standard canonical PB of classical mechanics $ \{q,p\}=1$. Let us consider time indices taking the values $t=0,1,\dots, N-1$ with $T=\epsilon N$ the total ``time window'' and spacing $\epsilon$. 
Now, if we introduce a discrete in time action
\begin{equation}\label{apeq:discreteaction}
 S= \epsilon\sum_t \left[p_t \dot{q}_t-\frac{p_t^2}{2m}-V(q_t)\right]\,,
\end{equation}
we can easily recover Hamilton equations from the constraints
\begin{align}\label{apeq:discreteconstraints}
    \{q_t, S\}&=\dot{q}_t-p_t\approx 0 \\
    \{p_t, S\}&=\dot{p}_t-V'(q_t) \approx 0
\end{align}
where we have defined a discrete derivative $\dot{q}_t:=\sum_{t',_n} \frac{i\epsilon\omega_n}{N} e^{i\omega_n (t-t')\epsilon}q_{t'}$ for $\omega_n= \frac{2\pi n}{T}$ (other possibilities, such as $\dot{q}_t:=\frac{q_{t+1}-q_t}{\epsilon}$ are possible). Notice that initial (and/or final) conditions can be imposed either in the definition of $S$ or as additional constraints. In this way, proper  discrete time versions of Hamilton equations are obtained.

\subsection[\appendixname~\thesubsection]{Quantum case}\label{app:discretequantum}
For the quantum case we impose
\begin{equation}
    [q_t,p_{t'}]=i\hbar \delta_{tt'}
\end{equation}
and quantize the action \eqref{apeq:discreteaction} accordingly. Notice that this algebra fixes the Hilbert-space representation uniquely up to unitary equivalence by the Stone-von Neumann theorem.

Interestingly, the fact that the action is quantized is yet another consequence of employing extended PBs so that operators $q_t,p_t$ at different times are independent. Let us also define $\mathcal{P}_t:=\epsilon\sum_t p_t \dot{q}_t$. Notably one can prove the following property
\begin{equation}
    e^{i\epsilon \mathcal{P}_t}q_te^{-i\epsilon \mathcal{P}_t}=q_{t+1}\,, \quad  e^{i\epsilon \mathcal{P}_t}p_te^{-i\epsilon \mathcal{P}_t}=p_{t+1}\,,
\end{equation}
namely, $\mathcal{P}_t$, the quantized Legendre transform,  generates geometrical translations in time. This simple fact can be used to prove a very powerful theorem \cite{diaz2025spacetime}: 
\begin{equation}
    {\rm Tr}\left[e^{i\mathcal{S}/\hbar}\otimes_{t=0}^{N-1} O^{(t)}_t\right]={\rm tr}\Big[e^{-iTH/\hbar}\;\hat{T}\prod_{t=0}^{N-1}O^{(t)}_H(\epsilon t)\Big]
\end{equation}
where the trace on the r.h.s. is defined in the conventional Hilbert of QM where $[q,p]=i\hbar$, $O_H(t)=e^{iH t/\hbar} O e^{-iHt/\hbar}$ are operators in the Heisenberg picture and $\hat{T}$ denotes temporal order. The intuition behind the theorem is that the map between traces can be interpreted as a generalization of the SWAP test  in quantum computing \cite{buhrman2001quantum} with $e^{i\epsilon \mathcal{P}_t}$ taking the role of a composition of SWAPS between time slices (see details in \cite{diaz2021path}).
In particular,
\begin{equation}\label{apeq:tracecorrels}
    {\rm Tr}\left[ |q\rangle_0 \langle q'| e^{i\mathcal{S}/\hbar}\otimes'_t q_t\right]=\langle q',T|\hat{T}\prod_{t}^{}{}^{'} q_H(\epsilon t)|q\rangle
\end{equation}
which is the propagator of the particle when no operator $q$ is inserted, and an $n$-point correlation function otherwise (the primer in the symbols $\otimes_t$, $\Pi_t$ denote that some subset of the $N$ time slices contain operators $q$ and identities otherwise). We see that fundamental quantities in standard QM can be recovered from timelike correlations of the operator $|q\rangle_0 \langle q'| e^{i\mathcal{S}/\hbar}$. Moreover, in \cite{diaz2021path} it was shown that when one employs the completeness relation $\int \prod_t dq_t |q_0q_1\dots q_{N-1}\rangle \langle q_0 q_1\dots q_{N-1}|=\mathbbm{1}$ to evaluate the l.h.s. of \eqref{apeq:tracecorrels} the PI of Feynman is recovered. Let us stress however that these expressions are defined independently on how we evaluate them, so that a PI-like evaluation is not necessary (one can e.g. diagonalize the action operator as discussed in \cite{diaz2021path}).

Having introduced the basic idea behind the discrete-time approach to SQM let us discuss the role of the classical constraints. At first glance, the fact that one needs to compute correlators of the exponential of the action seems completely disconnected from the constraints \eqref{apeq:discreteconstraints}. And, as discussed in the main text, these corresponds to constraints of the second class so that Dirac quantization scheme is not feasible. On the other hand, we can immediately show that
\begin{equation}\label{apeq:constraint}
    \begin{split}
         {\rm Tr}\left[ |q\rangle_0 \langle q'| e^{i\mathcal{S}/\hbar}\left( e^{i\mathcal{S}/\hbar} O_t e^{-i\mathcal{S}/\hbar} -O_t\right)\right]&={\rm Tr}\left[ |q\rangle_0 \langle q'| e^{i\mathcal{S}/\hbar}\left( e^{i\mathcal{S}/\hbar} O_t e^{-i\mathcal{S}/\hbar} -O_t\right)\right]\\
         &={\rm Tr}\left[ |q\rangle_0 \langle q'| e^{i\mathcal{S}/\hbar}\left(e^{-i \epsilon H_{t+1}/\hbar}O_{t+1}e^{i \epsilon H_{t+1}/\hbar} -O_t\right)\right]
         \\
         &=\langle q', T|\Big[e^{-i \epsilon H/\hbar}O_H((t+1)\epsilon)e^{i \epsilon H/\hbar} -O(t\epsilon)\Big]|q\rangle=0
    \end{split}
\end{equation}
which is the discrete time version of Theorem \ref{th:constraintseis} (which we proved using the weaker condition $[e^{i\mathcal{S}},\mathcal{S}]=0$, which is not satisfied when we multiply by the $|q\rangle \langle q'|$ on the initial slice). While this is the strict version we obtain for discrete time it is illustrative to rewrite the corresponding expressions for small $\epsilon$. In this case, if we use $e^{i\mathcal{S}/\hbar} O_t e^{-i\mathcal{S}/\hbar} -O_t=\frac{i}{\hbar}[\mathcal{S},O]+\mathcal{O}(\epsilon^2)$ we can rewrite \eqref{apeq:constraint} as 
\begin{equation}
    {\rm Tr}\left[ |q\rangle_0 \langle q'| e^{i\mathcal{S}/\hbar}\, [\mathcal{S},O]\right]=\langle q',T|\big(-i\hbar \dot{O}_H(t)-[H,O_H(t)]\big)|q\rangle+\mathcal{O}(\epsilon^2)\,.
\end{equation}
In other words, within brackets $\langle \dots \rangle_\mathcal{S}:={\rm Tr}[ |q\rangle_0 \langle q'| e^{i\mathcal{S}/\hbar}\dots ]$, the constraint $[\mathcal{S},O_t]$ is equal to the standard quantum expectation value of the \emph{Heisenberg equation} for the operator $O_H(t)$, and hence vanishes.

We see that just as the classical constraints $\{S,O\}\approx 0$ lead to Hamilton equations, their quantum version leads to the Heisenberg equations as desired. The only subtlety is that in order for this to hold, one need to compute physical quantities within brackets $\langle \dots \rangle_\mathcal{S}$. If we postulate this condition, then the classical formalism is immediately recovered: in the limit of $\hbar\to 0$ only a single trajectory contributes to $\langle \dots \rangle_\mathcal{S}$, and since $\langle [\mathcal{S},O]\rangle_\mathcal{S}=0$ for any value of $\hbar$ we obtain
\begin{equation}
0=\langle [\mathcal{S},O_t]\rangle_\mathcal{S}=\lim_{\hbar\to 0 }\langle [\mathcal{S},O_t]\rangle_\mathcal{S}=\{S,O_t\}|_{\text{classical-trajectory}}\,.   
\end{equation}
In conclusion, and since this holds for any $O_t$, our quantization scheme implies the constraints \eqref{apeq:discreteconstraints} in the classical limit. 

\subsection{Continuum time limit in Hilbert space, the parameter $\tau$ and regularizations}\label{app:continuumtimelimit}
In the main text we employed a continuum time formulation from the start. Here we discuss how this corresponds to considering the continuum time limit at the Hilbert space level and how it leads to the introduction of a time scale $\tau$ (similar considerations hold for the classical case).

The first step is to rescale operators $q(t):=q_t/\sqrt{\epsilon}$, $q(t):=p_t/\sqrt{\epsilon}$ such that 
\begin{equation}
   [q(t),p(t')]=\frac{1}{\epsilon}\delta_{tt'} \approx\delta(t-t')\,.
\end{equation}
By doing so, we changed the units of $q$ and $p$. This limit should be understood as the limit $\epsilon\to 0, N\to \infty$ such that $T=\epsilon N$ is fixed. Then we are still considering a time window $(0,T)$ (or $(-T/2,T/2)$). Notice also that the continuum time formulation should be understood in exactly the same sense as the usual continuum formulation of QFT. This also holds true for all dimensions: Since \eqref{eq:basicalg} is isomorphic to the standard canonical algebra in $d+1$ dimensions, no new Hilbert-space subtleties are introduced beyond those already familiar in ordinary QFT (such as the need to treat field operators as distributions, the failure of the Stone-Von Neummann theorem, etc).

As in ordinary continuum QFT, the continuum limit should be understood formally/distributionally: the fields at a point are not bona fide bounded operators, the canonical algebra does not by itself fix a unique Hilbert-space representation, and the limit from the regulated theory carries the usual UV/IR and domain subtleties. Since the algebra \eqref{eq:basicalg} is isomorphic to the standard canonical algebra in $d+1$ dimensions, these are precisely the standard continuum subtleties, not new ones specific to the present construction.

On the other hand, the continuum limit of the generator of time translations is well defined, so that
\begin{equation}
    \mathcal{P}_t \to \int dt\, p(t)\dot{q}(t)\,,
\end{equation}
satisfies
\begin{equation}
    e^{i\tau \mathcal{P}_t}q(t) e^{-i\tau \mathcal{P}_t}=q(t+\tau)\,, \quad e^{i\tau \mathcal{P}_t}p(t) e^{-i\tau \mathcal{P}_t}=p(t+\tau)
\end{equation}
for any parameter $\tau$ (with cyclic boundary conditions in time). The crucial difference with the discrete time case is that the concept of ``translating a single time step'' is no longer meaningful and a scale must be introduced to define time translations.

Moreover, for operators that are not quadratic, the continuum time limit is not straightforwardly defined. For example, $\epsilon \sum_t q_t^4=\epsilon \sum_t \epsilon^2 q^4(t)$ whose limit as an operator is not meaningful. Instead, we can replace with the well-defined  operators $\epsilon \sum_t \epsilon^2 q^4(t)\to \int dt\, \tau^2 q^4(t)$. Notably, as discussed in \cite{diaz2021path}, this parameter can be associated with a difference in the definition of the Hilbert space measure and Feynman measure in PIs. Moreover, as long as one considers the limit $\tau\to 0$ at the end of calculations, proper results are obtained (in \cite{diaz2021path} the case of finite $\tau$ is also considered).

In addition, the formulation presented in the main text corresponds to the limit $T\to \infty$. While this does not affect the operators $q(t), p(t)$, this limit corresponds to replacing discrete Fourier modes (in time), characterized by frequencies $\omega_n=\frac{2\pi n}{T}$, with continuum modes. Additional details can be found in \cite{diaz2023spacetime}. Let us however add one final remark: In this limit, the operator $e^{-iTH}$ that appears in correlation functions, becomes a projector if a small imaginary component is added to time. For this reason, the main text expressions such as \eqref{eq:aacorrel}, \eqref{eq:feynprop} correspond to vacuum correlation functions (and lead to Feynman prescription). \\

Having presented a discussion on regularizations of the spacetime algebra let us briefly describe how this translates to the breaking of Lorentz symmetry. In particular, we recall that field theories are only well-defined upon regularization, with an ensuing explicit breaking of Lorentz symmetry, as recently discussed in \cite{visser2009lorentz,polonyi2019boost,cartas2022lorentz} under the standard formalism. In the current approach, it is straightforward to introduce a regularization from the onset, without breaking the symmetric role of space and time, as it follows from time being just an index. This paves the way for a different approach to the problem that we now exemplify through the common use of a momentum cutoff. In the current scheme this corresponds to starting from the theory on a compact spacetime region $\Omega\subset \mathbb{R}^{1,d}$ such
that ladder operators in momentum space
may be expanded as 
\begin{equation}
    a(p)=\int_{\Omega} d^Dx\, e^{ip\cdot x}a(x)\,.
\end{equation}
 Notice that this implies a cutoff $\Lambda$ in momentum space, which for a compactified time leads to $\Lambda \sim 1/T$ for the $p^0$ component and $T$ the length of time.
Then, recalling that $[L_{\mu\nu}, a(x)]=(x_\mu\partial_\nu-x_\nu \partial_\mu)a(x)$, under the infinitesimal action of a Lorentz transformation we have
\begin{equation}
    [L_{\mu\nu}, a(p)]=\int_{\Omega} d^Dx\, e^{ip\cdot x}(x_\mu\partial_\nu-x_\nu \partial_\mu)a(x)=\big(p_\mu\partial_{\nu}-p_\nu\partial_{\mu}\big)a(p)
+
B_{\mu\nu}(p)
\end{equation}
where we integrated by parts with ensuing border correction
\begin{equation}
B_{\mu\nu}(p)
=
\int_{\partial\Omega} d\,\Sigma_\alpha\,
e^{ip x}\,
\big(x_\mu \delta^\alpha{}_\nu-x_\nu \delta^\alpha{}_\mu\big)\,
a(x).
\end{equation}
We see that while locally in spacetime Lorentz transformations are applied as usual, the border conditions spoil Lorentz invariance in Fourier/momentum space already at the Lie-algebra level. In the limit of a large $\Omega$ the border term contribution becomes negligible and Lorentz invariance is effectively recovered.

The formalism thus
provides a direct way to control symmetry breaking under regulators. Notice also that the previous analysis is fully kinematical and it does not involve the Hamiltonian or the action and the ladder operators are generic (not necessarily normal modes of a given theory). From this basis, one can study corrections to the quantum action under Lorentz transformations and hence quantify how much the regularization affects  observable quantities for specific theories.

\subsection{Comparison and differences with algebraic QFT and related perspectives}\label{app:AQFT}

Although the present work is formulated in a canonical Hilbert space language, it is natural to ask whether the current results are related to the standard algebraic QFT (AQFT) perspective \cite{haag1964algebraic}, even more so considering that the latter aims to provide a covariant setting that avoids the explicit use of equal-time commutation relations.  In this appendix we briefly compare the two formalisms showing that an algebraic version of the present construction substantially differs from standard AQFT in ways that further emphasize the separation between dynamical and geometrical information provided by the SQM framework.

Let us first briefly recall basic facts from AQFT in the particular example of a scalar field and under its connection to the canonical framework \cite{tjoa2022channel}.
In the algebraic approach, one assigns to each open spacetime region $\mathcal O\subset M$ a local algebra $\mathcal A(\mathcal O)$ of observables localized in that region. For a free scalar field, a basic object is the smeared field is
\begin{equation}
    \phi(f):=\int d^Dx\, f(x)\phi(x),
\end{equation}
where $f$ is a smooth compactly supported test function and $\operatorname{supp}(f)\subset \mathcal O$. Since $\phi(f)$ is typically an unbounded operator, one often works instead with the associated bounded Weyl operators $ W(f):=e^{i\phi(f)}$.
These generate the algebra $\mathcal A(\mathcal O)$ corresponding to the region $\mathcal O$, namely the $C^*$-algebra generated by all $W(f)$ with $\operatorname{supp}(f)\subset \mathcal O$. 
The Weyl operators satisfy the Weyl relations
\begin{equation}
    W(f)W(g)=e^{-\frac{i}{2}\sigma(f,g)}W(f+g),
\end{equation}
where $\sigma$ is the antisymmetric bilinear form determined by the field commutator $
    \sigma(f,g)=-i[\phi(f),\phi(g)]$. 
For the free Klein-Gordon field, one may write equivalently
\begin{equation}
    \sigma(f,g)=\int d^Dx\, d^Dy\, f(x)\,\Delta(x-y)\,g(y),
\end{equation}
where $\Delta(x-y)\equiv [\phi_H(x),\phi_H(y)]$ is a Green function of the Klein-Gordon operator and we have emphasized that in the canonical approach $\Delta(x-y)$ is related to the Heisenberg fields. This is an important remark: The bilinear form depends on the particular evolution under consideration.
Finally, we recall that in the algebraic setting the microcausality setting is
\begin{equation}
    [\mathcal A(\mathcal O_1),\mathcal A(\mathcal O_2)]=0
\end{equation}
whenever $\mathcal O_1$ and $\mathcal O_2$ are spacelike separated.

Let us now discuss the SQM counterpart of the previous construction. In the SQM approach, one starts  from the kinematical relations
\begin{equation}
[\phi(x),\phi(y)]=0,\qquad [\pi(x),\pi(y)]=0,\qquad [\phi(x),\pi(y)]=i\delta^{(D)}(x-y),
\end{equation}
where $x,y\in M$ are full spacetime points and $\delta^{(D)}(x-y)$ is the Dirac delta on spacetime. Proceeding in analogy with AQFT, for test functions $f,g\in C_c^\infty(M)$ one can define the smeared operators
\begin{equation}
\phi(f):=\int d^Dx\, f(x)\phi(x),\qquad
\pi(g):=\int d^Dx\, g(x)\pi(x)\,,
\end{equation}
implying
\begin{equation}
[\phi(f),\phi(g)]=0,\qquad [\pi(f),\pi(g)]=0, \qquad [\phi(f),\pi(g)]=i\int_M d^Dx\, f(x)g(x).
\end{equation}
In particular, if $\operatorname{supp}(f)\cap\operatorname{supp}(g)=\varnothing$, then all commutators vanish, with also $[\phi(f),\pi(g)]=0$.
A bounded version of this spacetime canonical algebra is obtained by introducing the Weyl operators
\begin{equation}
    W(f,g):=e^{i\phi(f)+i\pi(g)}.
\end{equation}
These satisfy
\begin{equation}
    W(f_1,g_1)W(f_2,g_2)
    =
    e^{-\frac{i}{2}\sigma((f_1,g_1),(f_2,g_2))}
    W(f_1+f_2,g_1+g_2),
\end{equation}
where now $\sigma$ denotes the antisymmetric bilinear form 
\begin{equation}
    \sigma\big((f_1,g_1),(f_2,g_2)\big)
    :=
    \int d^Dx\,\Big(f_1(x)g_2(x)-g_1(x)f_2(x)\Big).
\end{equation}
This formula follows directly from the smeared commutation relations above. We now remark that $\sigma$ is completely independent on the evolution, in clear contrast with standard AQFT.

Moreover, in analogy with standard AQFT, 
given an open region $\mathcal O\subset M$, one may then define the associated local algebra $\mathcal A(\mathcal O)$ as the $C^*$-algebra generated by all Weyl operators $W(f,g)$ with support on $\mathcal{O}$. Then, for all test functions supported respectively in $\mathcal O_1$ and $\mathcal O_2$ one has $\sigma\big((f_1,g_1),(f_2,g_2)\big)=0$
and therefore
\begin{equation}
    [\mathcal A(\mathcal O_1),\mathcal A(\mathcal O_2)]=0
\end{equation}
whenever $\mathcal O_1$ and $\mathcal O_2$ are disjoint. Unlike standard algebraic QFT, where commutativity is imposed only for spacelike separated regions, here the kinematical algebra leads to commutativity for arbitrary disjoint regions.

This brief comparison already shows fundamental differences between AQFT and an algebraic counterpart of the canonical approach to SQM: Firstly, since the momentum $\pi$ is completely independent of $\phi$ for all spacetime points, an algebraic approach would treat $\phi$ and $\pi$ on an equal-footing. This is in clear contrast with standard AQFT where one usually works with $\phi$ directly since $\pi$ is not an independent generator as it can be formally recovered from $\phi$ alone.
Secondly, no dynamical information enters the basic setup that defines the algebras (and in particular bilinear forms). This is the algebraic version of the fact that SQM encodes evolution itself on the action/spacetime states. 
From these basic considerations one may develop a version of AQFT where evolution is recovered a posteriori lifting the results of section \ref{sec:quantumactionquantization}. 
In fact, the previous considerations suggest, at least in principle, a generalization of the standard AQFT notion of states as linear functionals, which in a Hilbert space representation takes the form ${\rm Tr}[\rho\,\dots]$, to linear functionals corresponding to quantities of the form ${\rm Tr}[\mathcal{R}\,\dots]$, properly taking into account the features of $\mathcal{R}$ discussed in section \ref{sec:ststates}.

\section[\appendixname~\thesection]{Spacetime states of transitions through on-shell modes}\label{app:Smatrix}

In this appendix we discuss how one can define a spacetime state corresponding to the $S$-matrix transitions. Our discussion assumes continuum and unbounded time as in the main text.

Let us start by noting that for $-T<t<T$ (here $T$ is an arbitrarily large positive time) 
\begin{equation}
    \frac{{\rm Tr}[e^{i\mathcal{S}_\tau}C(-T) O(t) A(T)]}{{\rm Tr}[e^{i \mathcal{S}_\tau}]}\equiv \langle 0| A_H(T) O_H(t) C_H(-T)|0\rangle
\end{equation}
for $|0\rangle$ the ground state of the Hamiltonian corresponding to $\mathcal{S}_\tau$ and where we omit the explicit $\tau$ limits for ease of notation.
In particular, if $A$ is a product of annihilation operators and $C$ of creation operators it is clear that we are describing excited states. The corresponding spacetime state (or transition) is hence given  by $A(T) e^{i\mathcal{S}_\tau}C(-T)$ as it follows from the cyclicity of the trace, and in agreement with the main text discussion.

Consider now the case $O\equiv \mathbbm{1}$ so that we are only describing a transition. Let us ``extract'' one of the creation operators so that $C(-T)= C'(-T) a^\dag(-T,\textbf{p})$ and replace it with $C'(-T) a^\dag(-T,\textbf{p})\to C'(-T) \left(a^\dag(-T,\textbf{p})+a^\dag(T,\textbf{p})\right)$. Now the previous equation reads
\begin{equation}
\begin{split}
       \frac{{\rm Tr}[e^{i\mathcal{S}_\tau}C'(-T)\left(a^\dag(-T,\textbf{p})+a^\dag(T,\textbf{p})\right)A(T)]}{{\rm Tr}[e^{i \mathcal{S}_\tau}]}&\equiv \langle 0| A_H(T) C'_H(-T)a^\dag_H(-T,\textbf{p})|0\rangle\\&+\langle 0| A_H(T) a^\dag_H(T,\textbf{p}) C'_H(-T)|0\rangle\,.
\end{split}
\end{equation}
If none of the out particles has the momentum of the in particles the second term vanishes. Instead, under standard assumption, for large $T$ we expect $a^\dag_H(-T,\textbf{p})|0\rangle\simeq e^{-i E_{\textbf{p}}T}a^\dag(\textbf{p})|0\rangle$. This means that we can safely replace the creation operator with the difference between two times, and we can get rid of the phase by considering $C(-T)\to C'(-T) \left( e^{i E_{\textbf{p}}T}a^\dag(-T,\textbf{p})- e^{-i E_{\textbf{p}}T}a^\dag(T,\textbf{p})\right)$. Let us recall that a very similar procedure is usually employ to derive the LSZ reduction formula \cite{maggiore2005modern}. We can now make use of the asymptotic relation \cite{weinberg2005quantum} ($T\gg 1$)
$$
f(T)+f(-T)=\lim_{\epsilon\to 0^+}\epsilon \int_{-\infty}^\infty dt\, f(t) e^{-\epsilon |t|}
$$
to write
\begin{equation}
    \left(e^{-i E_{\textbf{p}}T}a^\dag(T,\textbf{p})+e^{i E_{\textbf{p}}T}a^\dag(-T,\textbf{p})\right)=\lim_{\epsilon\to 0^+}\epsilon \int_{-\infty}^\infty dt\, a^\dag(t,\textbf{p}) e^{-i E_{\textbf{p}} t}e^{-\epsilon |t|}\approx -i (p^0-E_{\textbf{p}}+i \epsilon) a^\dag(E_{\textbf{p}},\textbf{p})
\end{equation}
where we used the Fourier transform definition of the extended momenta modes. Notice that this last step is only accessible through the spacetime formalism. Combining all these results we have proven that we can replace 
\begin{equation}
    A(T) e^{i\mathcal{S}_\tau}C(-T)\to -i (p^0-E_{\textbf{p}}+i \epsilon)A(T) e^{i\mathcal{S}_\tau}C'(-T)a^\dag(E_{\textbf{p}},\textbf{p})
\end{equation}
without affecting the trace.

We can do a completely analogous discussion with the remaining creation and annihilation operators. In addition, let us recall that the external states that define particles in the standard $S$ matrix formalism are given by $|\textbf{p}\rangle\equiv \sqrt{2E_{\textbf{p}}}a^\dag(\textbf{p})|0\rangle$. In order to match this convention we need an additional factor for each ladder operator so we end up with 
\begin{equation}
\begin{split}
      A(T) e^{i\mathcal{S}_\tau}C(-T)&\to \mathcal{R}':=\prod_{i=1}^n [-i \sqrt{2E_{\textbf{p}_i}\tau}(p_i^0-E_{\textbf{p}_i}+i \epsilon)] \prod_{j=1}^{n'} [-i \sqrt{2E_{\textbf{k}_j}\tau}(k_j^0-E_{\textbf{k}_j}+i \epsilon)]\times \\ &\times a(E_{\textbf{k}_1},\textbf{k}_1)a(E_{\textbf{k}_2},\textbf{k}_2)\dots a(E_{\textbf{k}_{n'}},\textbf{k}_{n'}) e^{i\mathcal{S}_\tau} a^\dag(E_{\textbf{p}_1},\textbf{p}_1)a^\dag(E_{\textbf{p}_2},\textbf{p}_2)\dots a^\dag(E_{\textbf{p}_n},\textbf{p}_n)\,,
\end{split}
\end{equation}
where we have also reintroduced the factors $\sqrt{\tau}$, one for each ladder operator. This is the final formula that leads to ${\rm Tr}[\mathcal{R}']$ being equal to the $S$ matrix elements \footnote{We implicitly omitted a small detail: The presence of a field renormalization strength $Z$ which maybe reintroduced by replacing $\tau\to Z \tau$ and corresponds to a proper rescaling of the asymptotic ladder operators that define asymptotically free states up to a rescaling.  }.

Considering the many assumption involved in the previous derivation let us add a few comments on the formula and explicitly show how it
is applied in practice, thus confirming its correctness. For concreteness let us consider the first order  of a $2\to 2$ scattering process in the $\lambda \phi^4$ model. We have
\begin{equation}
\begin{split}
      {\rm Tr}[\mathcal{R}']= \prod_{i=1}^2 [-i \sqrt{2E_{\textbf{p}_i}\tau}(p_i^0-E_{\textbf{p}_i}+i \epsilon)] \prod_{j=1}^{2} [-i \sqrt{2E_{\textbf{k}_j}\tau}(k_j^0-E_{\textbf{k}_j}+i \epsilon)]\times \\
     \times (-i \frac{\lambda}{4!} \tau^2) \int d^Dz\, \langle \phi^4(z)  a^\dag(E_{\textbf{p}_1},\textbf{p}_1) a^\dag(E_{\textbf{p}_2},\textbf{p}_2)a(E_{\textbf{k}_1},\textbf{k}_1)a(E_{\textbf{k}_2},\textbf{k}_2)\rangle_\tau+\mathcal{O}(\lambda^2)
\end{split}
\end{equation}
where the brackets indicate expectation value with respect to the free action. This means that we can use Wick's theorem and decompose the mean value in product of two-point contractions. The relevant contractions are 
\begin{equation}
    \begin{split}
        \langle \phi(z)  a^\dag(E_{\textbf{p}},\textbf{p})\rangle_\tau&= \frac{e^{-i pz}}{\tau \sqrt{2 E_{\textbf{p}}}}\frac{i}{p^0-E_{\textbf{p}}+i\epsilon}+\mathcal{O}(\tau)\\
               \langle \phi(z)  a(E_{\textbf{k}},\textbf{k})\rangle_\tau&= \frac{e^{i kz}}{\tau \sqrt{2 E_{\textbf{k}}}}\frac{i}{k^0-E_{\textbf{k}}+i\epsilon}+\mathcal{O}(\tau)\,.
    \end{split}
\end{equation}
We see that the overall factors precisely cancel the poles arising as a consequence of these contractions. Just as when working with the LSZ reduction formula, one  comput equantities off-shell, cancel the pole factors and then consider the on-shell limit. Notice that within the spacetime formalism working off-shell can be actually made rigorously since the extended ladder operators are well-defined for arbitrary momenta. One may then employ the intuition developed in \cite{diaz2019historystate, diaz2019history} for single particles where a time operator leads to a mass operator. Returning to the previous equation we obtain to first order in $\tau$ and $\lambda$
\begin{equation}
\begin{split}
      {\rm Tr}[\mathcal{R}']\equiv  
    -i \lambda  \int d^Dz\, e^{-iz (p_1+p_2-k_1-k_2)}= -i \lambda (2\pi)^D \delta^{(D)}(p_1+p_2-k_1-k_2)
\end{split}
\end{equation}
which is the standard result with the delta indicating momenta conservation. The considerations made within the example apply to general scattering processes.

\section[\appendixname~\thesection]{Dirac Fermions}\label{app:fermions}

The result we presented for bosons in the main text can be easily generalized to fermions, with similar considerations applying to both. In particular, a single particle QT approach to Dirac equation has been developed in \cite{diaz2019historystate}. Therein it is shown that the corresponding single particle basis is given by states $|x,\sigma\rangle\equiv |t\rangle \otimes |\textbf{x}\rangle \otimes |\sigma\rangle$, namely we have an additional spinorial sector with respect to bosons, the usual spatial part, and an additional basis of time-eigenstates.
Similarly to what we discussed in section \ref{sec:sp}, one introduces a universe equation $(p^0+H_D)|\Psi\rangle=0$ now leading to Dirac's equation upon projection onto $\langle x|$ and for $H_D$ the Dirac hamiltonian. An equivalent condition is $\gamma^\mu p_\mu|\Psi\rangle=m|\Psi\rangle$ highlighting that physical states are mass eigenstates.

When generalizing this scheme to the second quantization setting, one identifies
 \begin{equation}
     |x,\sigma\rangle\equiv \psi^\dag_\sigma(x)|\Omega\rangle\,,
 \end{equation}
and imposes an antisymmetric algebra corresponding to 
\begin{equation}\label{eq:anticommfields}
    \{\psi_a(x),\psi^\dag_b(y)\}=\delta^{(4)}(x-y)\delta_{ab}\,,
\end{equation}
and to $\mathcal{H}=\text{Antisym}[\oplus_n h^{\otimes n}]$. This forces us to use anti-commutation relations both for space and time. Interestingly, this means that there is no longer a tensor product structure across time
as in the bosonic case. In particular, this means that the SWAP related discussion we presented in \ref{sec:quantumactionquantization}, with the quantum action involving a concatenation of SWAPs, cannot hold. Nonetheless, if we consider the second quantization of the universe operator one finds (the steps are the same as in the bosonic case; see \cite{diaz2025spacetime} for an explicit proof)
\begin{subequations}\label{eq:dqaction}
\begin{align}
    \mathcal{S}
    = \int dt\, \left[ \int d^dx\,\psi^\dag(t,\textbf{x})i\dot{\psi}(t,\textbf{x})- H_D(t)\right]
   =\int \!d^Dx\, \bar{\psi}(x)(\gamma^\mu i\partial_\mu-m){\psi}(x)\,,  \label{eq:relaction}
\end{align}
\end{subequations}
for $H_D(t)=\int d^3x\, \psi^\dag(t,\textbf{x})(-i\boldsymbol{\alpha}\cdot \nabla+\beta m)\psi(t,\textbf{x})$ the Dirac Hamiltonian as a function of spacetime operators, at a given time $t$. We see that the fermionic quantum action also emerges naturally from the second quantized universe operator of the correpsonding QT scheme.

Notably, 
\begin{equation}
\begin{split}
     \mathcal{P}
    =\int dt\int d^3x\, \psi^\dag(t,\textbf{x})i\dot{\psi}(t,\textbf{x})=\int d^4x\, \bar{\psi}(x)\gamma^0 i\partial_0{\psi}(x)\,,
\end{split}
\end{equation}
which has the form of the classical \emph{Legendre transform} for a Dirac field, generates time translations: 
\begin{equation}
    e^{i\tau \mathcal{P}}\psi(t,\textbf{x}) e^{-i\tau \mathcal{P}}=\psi(t+\tau,\textbf{x})\,,
\end{equation}
just as in the bosonic case even if this relation follows now from anticommutators. As a matter of fact, one can develop a discrete time version of the formalism \cite{diaz2025spacetime} making use of a variant of the operator fSWAP, namely the fermionic version of the SWAP operator that properly takes into account the antisymmetric nature of fermions: For $N=2$, and a single fermionic mode 
$e^{i\epsilon\mathcal{P}}=
\left(\begin{smallmatrix}
1&0&0&0\\
0&0&1&0\\
0&-1&0&0\\
0&0&0&1
\end{smallmatrix}\right)
$
which has an additional minus sign with respect to a standard SWAP. For general $N$ the operator $e^{i\epsilon \mathcal{P}}$ is a composition of these operators across different time modes. 
It is also important to remark that fSWAP is a gaussian operator, namely the exponential of a quadratic fermionic operator, while SWAP is not (in fact free fermionic operations together with SWAP lead to universal unitaries \cite{hebenstreit2019all}).

Now, by using the PBs version of Eq.\ \eqref{eq:anticommfields} it is easy to see that the SCM formalism we discussed in the main text holds for fermions. In particular, the constraint action $\{S,\psi^\dag\}\approx 0$ leads to the Dirac (field) equation. These constraints are also of the second kind yielding the same kind of problems we discussed in section \ref{sec:limitations}. Just as in the bosonic case the right approach is to exponentiate the action and consider expectation values of insertion of operators (with an additional parity operator $P$ to be included for fermions; this turns out to be to the fermionic path integral): If we define $
    \langle \dots \rangle_\tau:=\frac{{\rm Tr}[P e^{i\tau\mathcal{S}}\dots]}{{\rm Tr}[P e^{i\tau\mathcal{S}}]}\,,$ and, in addition,
    we consider the FT of the fields, so that the action reads
  $
    \mathcal{S}=\int \frac{d^Dp}{(2\pi)^4}\psi^\dag(p)\gamma^0(\gamma^\mu p_\mu-m)\psi(p)\,,
$ elementary fermionic properties yield 
\begin{align}
    \langle \psi(p)\bar{\psi}(k)\rangle_\tau&=\frac{\delta^{(D)}(p-k)}{\mathbbm{1}-\exp[i\tau \gamma^0(\gamma^\mu p_\mu-m)/(2\pi)^D]}\gamma^0\label{eq:pprop}\\
    &=\frac{1}{\tau}\frac{i}{\gamma^\mu p_\mu-m}(2\pi)^4\delta^{(D)}(p-k)+\mathcal{O}(\tau)\,,\label{eq:ppropstau}
\end{align}
where we used the basic property $\gamma^0 \gamma^0=1$ leading to $[\gamma^0 (\gamma^\mu p_\mu-m)]^{-1}=(\gamma^\mu p_\mu-m)^{-1}\gamma^0$.
We see that for small $\tau$ the FT of the Dirac propagator is obtained, with \eqref{eq:pprop} an analytic function for any real $\tau$ taking $m^2\equiv m^2-i\epsilon$. As a result,
\begin{equation}\label{eq:diracprop}
     \lim_{\tau\to 0} \langle \sqrt{\tau}\psi(x)\sqrt{\tau}\bar{\psi}(y)\rangle_\tau=\!\int \!\!\frac{d^Dp}{(2\pi)^D}\frac{i(\gamma^\mu p_\mu+m)}{p^2-m^2+i\epsilon}\,e^{-ip(x-y)}
\end{equation}
which is just the position space propagator of a Dirac field.

We have schematically shown how all the main text discussion about bosons holds for fermions. Along these lines we can also develop fermion-boson interacting field theories. Let us also mention that as usual, quantizing 
Dirac fields with commutators leads to negative energies so that the proper statistic are indeed required. We refer the reader to \cite{diaz2025spacetime} for additional details on the relation between SQM and a QT approach to fermions, including a discussion on the notion of antiparticle when time is quantum.

\end{document}